\definecolor{darkorange}{rgb}{1, 0.55, 0.0}
\definecolor{olivegreen}{rgb}{0,0.5,0.1}
\theoremstyle{plain}
\newtheorem{theorem}{Theorem}[section]
\newtheorem{lemma}[theorem]{Lemma}
\newtheorem{corollary}[theorem]{Corollary} 
\newtheorem{definition}[theorem]{Definition}
\begin{document}

\title{A trace distance-based geometric analysis of the stabilizer polytope for few-qubit systems}

\author{Alberto Palhares}
\affiliation{Physics Department, Federal University of Rio Grande do Norte, Natal, 59072-970, Rio Grande do Norte, Brazil}
\affiliation{International Institute of Physics, Federal University of Rio Grande do Norte, 59078-970, Natal, Brazil}
\author{Santiago Zamora}
\affiliation{Physics Department, Federal University of Rio Grande do Norte, Natal, 59072-970, Rio Grande do Norte, Brazil}
\affiliation{International Institute of Physics, Federal University of Rio Grande do Norte, 59078-970, Natal, Brazil}
\author{Rafael A. Mac\^{e}do}
\affiliation{Physics Department, Federal University of Rio Grande do Norte, Natal, 59072-970, Rio Grande do Norte, Brazil}
\affiliation{International Institute of Physics, Federal University of Rio Grande do Norte, 59078-970, Natal, Brazil}
\author{Tailan S. Sarubi}
\affiliation{Physics Department, Federal University of Rio Grande do Norte, Natal, 59072-970, Rio Grande do Norte, Brazil}
\affiliation{International Institute of Physics, Federal University of Rio Grande do Norte, 59078-970, Natal, Brazil}
\author{Joab M. Varela}
\affiliation{Physics Department, Federal University of Rio Grande do Norte, Natal, 59072-970, Rio Grande do Norte, Brazil}
\affiliation{International Institute of Physics, Federal University of Rio Grande do Norte, 59078-970, Natal, Brazil}
\author{Gabriel W. C. Rocha}
\affiliation{Physics Department, Federal University of Rio Grande do Norte, Natal, 59072-970, Rio Grande do Norte, Brazil}
\author{Darlan A. Moreira}
\affiliation{School of Science and Technology, Federal University of Rio Grande do Norte, Natal, Brazil}
\author{Rafael Chaves}
\affiliation{International Institute of Physics, Federal University of Rio Grande do Norte, 59078-970, Natal, Brazil}
\affiliation{School of Science and Technology, Federal University of Rio Grande do Norte, Natal, Brazil}

\begin{abstract}
Non-stabilizerness is a fundamental resource for quantum computational advantage, differentiating classically simulable circuits from those capable of universal quantum computation. Recently, non-stabilizerness has been shown to be relevant for a few qubit systems. In this work, we investigate the geometry of the stabilizer polytope in few-qubit quantum systems, using the trace distance to the stabilizer set to quantify non-stabilizerness. By randomly sampling quantum states, we analyze the distribution of non-stabilizerness for both pure and mixed states and compare the trace distance with other non-stabilizerness measures, as well as entanglement. Additionally, we give an analytical expression for the introduced quantifier, classify Bell-like inequalities corresponding to the facets of the stabilizer polytope, and establish a general concentration result connecting non-stabilizerness and entanglement via Fannes’ inequality. Our findings provide new insights into the geometric structure of non-stabilizerness and its role in small-scale quantum systems, offering a deeper understanding of the interplay between quantum resources. 
\\ \\
\noindent \textbf{Keywords:} Non-stabilizerness, Stabilizer polytope, Trace distance, Entanglement, Quantum Resource Theory.
\end{abstract}

\maketitle

\section{Introduction}

Non-stabilizerness is a fundamental concept in understanding the potential advantages of quantum algorithms. As shown by the Gottesman-Knill theorem \cite{gottesman1998heisenberg}, stabilizer states \cite{gottesman1997stabilizer,hein2004multiparty} $-$ those that can be generated by Clifford circuits, which consist solely of Hadamard gates, $\pi/2$-phase gates, and controlled-NOT gates $-$ can be efficiently simulated on classical computers \cite{aaronson2004improved}. That is, while entanglement is a necessary resource for quantum computational advantage, it is not sufficient, since Clifford circuits can generate highly entangled states yet remain efficiently classically simulable. Thus, achieving universal quantum computation, in particular in the fault-tolerant regime \cite{bravyi2005universal,bravyi2016improved, howard2017application}, requires accessing states and operations beyond the stabilizer formalism, those that exhibit entanglement and the distinct resource of non-stabilizerness.

Given its significance for quantum information processing, resource theories \cite{Veitch_2014,howard2017application,PhysRevLett.128.210502}, and various measures \cite{Veitch_2014,wang2020efficiently,Hamaguchi2024Robustness,haug2023scalable,cepollaro2024harvesting,cao2024gravitational} of non-stabilizerness have been proposed in the literature, each capturing distinct facets of this resource and providing complementary insights into its role in enabling quantum computational advantage. A widely used measure is the Robustness of Magic (RoM) \cite{howard2017application}, which quantifies how much noise has to be added to the state such that it can be expressed as a convex mixture of stabilizer states. It also measures the complexity of a state using Monte Carlo techniques \cite{heinrich2019robustness, pashayan2015estimating}. Similarly, the mana \cite{Veitch_2014}, derived from the negativity of quasiprobabilities in the Wigner function, establishes a connection between non-stabilizerness and the contextuality of quantum theory~\cite{Delfosse_2017,Heimendahl2019,Zurel2024hiddenvariablemodel}. More recently, stabilizer entropies \cite{leone2022stabilizer,haug2024efficient} have been introduced as an efficient and computationally accessible tool, particularly suited for exploring non-stabilizerness in many-body quantum systems \cite{liu2022many,haug2023quantifying}. These measures offer a versatile framework for characterizing and leveraging non-stabilizerness in diverse quantum information tasks.

A key geometric concept underpinning these ideas is the stabilizer polytope \cite{heinrich2019robustness,obst2024wigner}, a convex polytope whose vertices correspond to pure stabilizer states. For $n$ qudits, the polytope has $ d^n \prod_{i=1}^{n}(d^{i} + 1)$ vertices \cite{Veitch_2014}, reflecting its exponential growth with system size. This complexity makes the exact characterization of its structure intractable for large systems—precisely where non-stabilizerness becomes a vital computational resource beyond classical simulability. While asymptotic properties have been extensively studied in many-body regimes \cite{leone2022stabilizer, liu2022many, chen2024magic, turkeshi2025magic}, comparatively little attention has been given to the few-qubit regime \cite{howard2012nonlocality, howard2015maximum, meyer2024bell, wagner2024certifyingnonstabilizernessquantumprocessors, macedo2025witnessing, iannotti2025entanglement, cusomano2025}. Yet, in this low-dimensional setting, the exponential structure of the polytope becomes fully analyzable: its facets, symmetries, and distance-based properties can be rigorously explored. These features not only deepen our understanding of the polytope itself but also provide essential benchmarks for developing scalable approximations in larger systems. Moreover, geometric analyses in small systems have revealed nontrivial patterns, such as connections between polytope facets and Bell inequalities \cite{howard2012nonlocality, howard2015maximum, meyer2024bell, macedo2025witnessing, cusomano2025}, as well as intricate trade-offs between non-stabilizerness and entanglement \cite{dowling2025bridging}. Recent results also indicate that non-stabilizerness has operational significance beyond computation, playing a role in quantum communication \cite{zamora2025prepare} and randomness generation \cite{vairogs2025extracting}, with nontrivial manifestations even at the single-qubit level. Taken together, these insights highlight that analyzing the geometry of the stabilizer polytope in small systems offers conceptual, computational, and experimental value that extends well beyond their classical simulability.

With this context in mind, this paper aims to provide a detailed analysis of the stabilizer polytope and its key geometric properties in small-dimensional quantum systems. Inspired by techniques commonly used in the study of Bell nonlocality \cite{brito2018quantifying}, we adopt the trace distance as a measure of non-stabilizerness, quantifying the deviation of a given quantum state from the stabilizer polytope. By sampling both pure states according to the Haar measure and mixed ones by the induced Hilbert-Schmidt measure \cite{zyczkowski2001induced, hall1998random}, we gain valuable insights into the structure of the state space. Specifically, we compute the distribution of trace distances and classify Bell-like inequalities associated with the facets of the stabilizer polytope, providing analytical expressions for the simplest cases. Building on these results for small system sizes, we derive a general analytical concentration result, leveraging Fannes’ inequality \cite{fannes1973continuity} to establish a fundamental connection between non-stabilizerness and entanglement.

The paper is organized as follows. In Sec. \ref{sec:geometry_and_magic}, we establish the theoretical framework, defining the stabilizer polytope, the Clifford group, and the measures of non-stabilizerness used in this work. In Sec. \ref{ref:geometric_analysis}, we present our analytical results on the geometry of the polytope, deriving expressions for the NTD and identifying optimal states and Bell-like witnesses. In Sec. \ref{sec:distribution}, we move to a numerical analysis, exploring the distribution of non-stabilizerness and its robustness against noise. In Sec. \ref{sec:magicent}, we investigate the relation between non-stabilizerness and entanglement, providing a general concentration result. Finally, in Sec. \ref{sec:Discussion}, we discuss our results and potential future directions.

\section{Geometry of the Stabilizer Polytope and Non-Stabilizerness}\label{sec:geometry_and_magic}

\subsection{Stabilizer Polytope and Clifford Group}
\label{subsec:stab}

In this work, our landscape of interest is the state space of $n$ qudits, $\mathcal D(\mathcal H_{d,n})$, defined with Hilbert space $\mathcal H_{d,n} = (\mathbb C^d)^{\otimes n}$, with $d$ prime. On it, we will be particularly interested in defining a special class of states, the stabilizer ones \cite{gross2006hudson, gross2007non, Hostens_2005}. This section is dedicated to their construction and to highlighting the corresponding applications.

A set of distinguished operators that act on a qudit are the Heisenberg-Weyl operators $Z$ and $X$, which satisfy $ZX = e^{2\pi i/d}XZ$ and $X^d = Z^d =1$. These operators are unitary, and they form a complete orthogonal basis for the vector space of all linear operators on the Hilbert space. 
 Then, we define the Pauli group for $n$ qudits as:
\begin{equation}
    \mathcal P_{d,n} \equiv \{f_d(\mathbf a, \mathbf b) Z^\mathbf a X^\mathbf b\;|\; \mathbf a, \mathbf b \in \mathbb Z_d^{\times n}\} \;\text{ },\label{eq:Pauli_group}
\end{equation}
where $f_{d\neq 2}(\mathbf a , \mathbf b) = \omega_d^{2^{-1}\mathbf a \cdot \mathbf b }$, where $2^{-1} \mathbf a \cdot \mathbf b$ is computed $\mod d$ with $\omega_d= \exp(2\pi i/d)$, and $f_2(\mathbf a, \mathbf b) = i ^{\mathbf a \cdot \mathbf b}$, with  $Z^\mathbf a  \equiv \otimes_{j=1}^n Z^{a_j}$, $X^\mathbf b \equiv \otimes_{j=1}^n X^{b_j}$. Its elements $D_{\mathbf a , \mathbf b}\equiv f_d(\mathbf a, \mathbf b) Z^\mathbf a X^\mathbf b$ are sometimes referred to as displacement operators.

We know that any pure state can be obtained from the action of a unitary $U \in U(\mathcal H_{d,n})$ on a reference state, which we choose $|0\rangle^{\otimes n} \in \mathcal H_{d,n}$, following the convention in the quantum computing literature. We will be particularly interested on the states generated by what is called the Clifford group $\mathcal C\ell_{d,n} \subseteq U(\mathcal H_{d,n})$, which are defined as the unitaries $C$ which preserve the Pauli group, that is, given some $\mathbf a , \mathbf b \in \mathbb Z_d^{\times n}$, there are $\mathbf a^\prime, \mathbf b^\prime \in \mathbb Z_d^{\times n}$, $\phi \in [0,2\pi)$ such that:
\begin{equation}
    C D_{\mathbf a, \mathbf b}C^\dagger = \exp{(i\phi)} D_{\mathbf a^\prime,\mathbf b^\prime}\;\text{ }.
\end{equation}

Their orbit in the Hilbert space naturally defines the stabilizer (pure) states:
\begin{equation}
    \mathrm{STAB}_{d,n}  \equiv\{C |0\rangle^{\otimes n}\;|\; C \in \mathcal C\ell_{d,n}\} \subseteq \mathcal H_{d,n}\;\text{ }.
\end{equation}

Their name is motivated by having a dual description in terms of abelian groups: A qudit stabilizer group is defined as an abelian subgroup $\mathcal S \subseteq \mathcal P_{d,n}$  on which $e^{i\phi}1 \notin \mathcal S\;;\;\forall \phi \in [0,2\pi)$, where we will also impose that $\mathcal S$ has $n$ generators. Then, there is  a unique state $|\mathcal S\rangle \in \mathcal H_{d,n}$ where $s |\mathcal S\rangle= |\mathcal S \rangle\;,\; \forall s \in \mathcal S$, that is, we say that $|\mathcal S\rangle$ is the state \emph{stabilized} by $\mathcal S$. One can show \cite{gottesman1997stabilizer} that there is a Clifford operation $C_\mathcal S  \in \mathcal C\ell_{d,n}$ such that $|\mathcal S \rangle =C_\mathcal S |0\rangle^{\otimes n}$ and vice-versa: there is a stabilizer group for every choice of a Clifford unitary. Therefore, we will refer to $\mathrm{STAB}_{d,n}$ as both the set of stabilizer states and groups of $n$ qudits.

We can then define stabilizer states as those in $\mathcal D(\mathcal H_{d,n})$ obtained from an ensemble of pure stabilizer ones. This makes sense since $\mathrm{STAB}_{d,n}$ is a countable set, albeit one with exponentially many elements as $n$ increases \cite{aaronson2004improved}. The corresponding convex hull defines what is known as the stabilizer polytope (in its $V$-representation):
\begin{equation}
    \begin{split}
    \label{eq: stab_polytope_V}
        P^\mathrm{STAB}_{d,n} = \Bigg\{\rho \in \mathcal D(\mathcal H_{d,n})\;|\; \rho = \sum_{\mathcal S \in \mathrm{STAB}_{d,n}} p_\mathcal S |\mathcal S \rangle \langle \mathcal S| \; ;\\ \; p_\mathcal S \geq 0 \quad; \quad \sum_{\mathcal S \in \mathrm{STAB}_{d,n}}p_\mathcal S =1\Bigg\}\;\text{ }.
    \end{split}
\end{equation}
We will study how the space of states splits into stabilizer and non-stabilizer states:
\begin{equation}
    \mathcal D(\mathcal H_{d,n})  = P^\mathrm{STAB}_{d,n}\sqcup \mathcal M_{d,n}\;\text{ },
\end{equation}
with respect to the geometry of $P^\mathrm{STAB}$ and its entanglement structure. Given some operator $A \in \mathcal B(\mathcal H_{d,n})$, we will denote $\| A\| \equiv \mathrm{Tr}\left(\sqrt{A^\dagger A}\right)$ as its trace norm.

To provide context for our analysis, we review  known facts of the stabilizer polytope for 1 qubit, 1 qutrit, and 2 qubits. In particular, we show the analytical form of the facets of the polytopes explicitly in the generalized Bloch sphere and we also reproduce the states that maximally violate these inequalities. 

\subsubsection{1 qudit}

For a $d$-dimensional system, the stabilizer polytope $P^\mathrm{STAB}_{d,1}$ can be easily defined by its $H$-representation through the following set of inequalities \cite{Howard2014}:

\begin{equation}
    P^{\text{STAB}}_{d,1} = \{\rho: \text{Tr}(\rho A^{\bf{q}}) \geq 0\text{ },\text{ } {\bf{q}}\in\mathbb{Z}_d^{d+1}\}\text{ },\label{eq: H-rep}
\end{equation}
where $\mathbb{Z}_d ={0,1,...,d-1}$, and $ A^{\bf{q}} = -I_d + \sum_{j=1}^{d+1}\Pi_j^{q_j}$ with $\Pi_j^{q_j}$ denoting the projector onto the eigenvector corresponding to the eigenvalue $\omega^{q_j}$ of the $j$-th operator in the list $\{D_{0,1}, D_{1,0}, D_{1,1},..., D_{1,d-1}\}$. Each operator $D_{x,z}$ is an element of the Pauli group $\mathcal P_{d,1}$ defined in Eq.~(\ref{eq:Pauli_group}). 

\begin{figure}[h!] 
    \centering 
    \includegraphics[width=0.75\columnwidth]{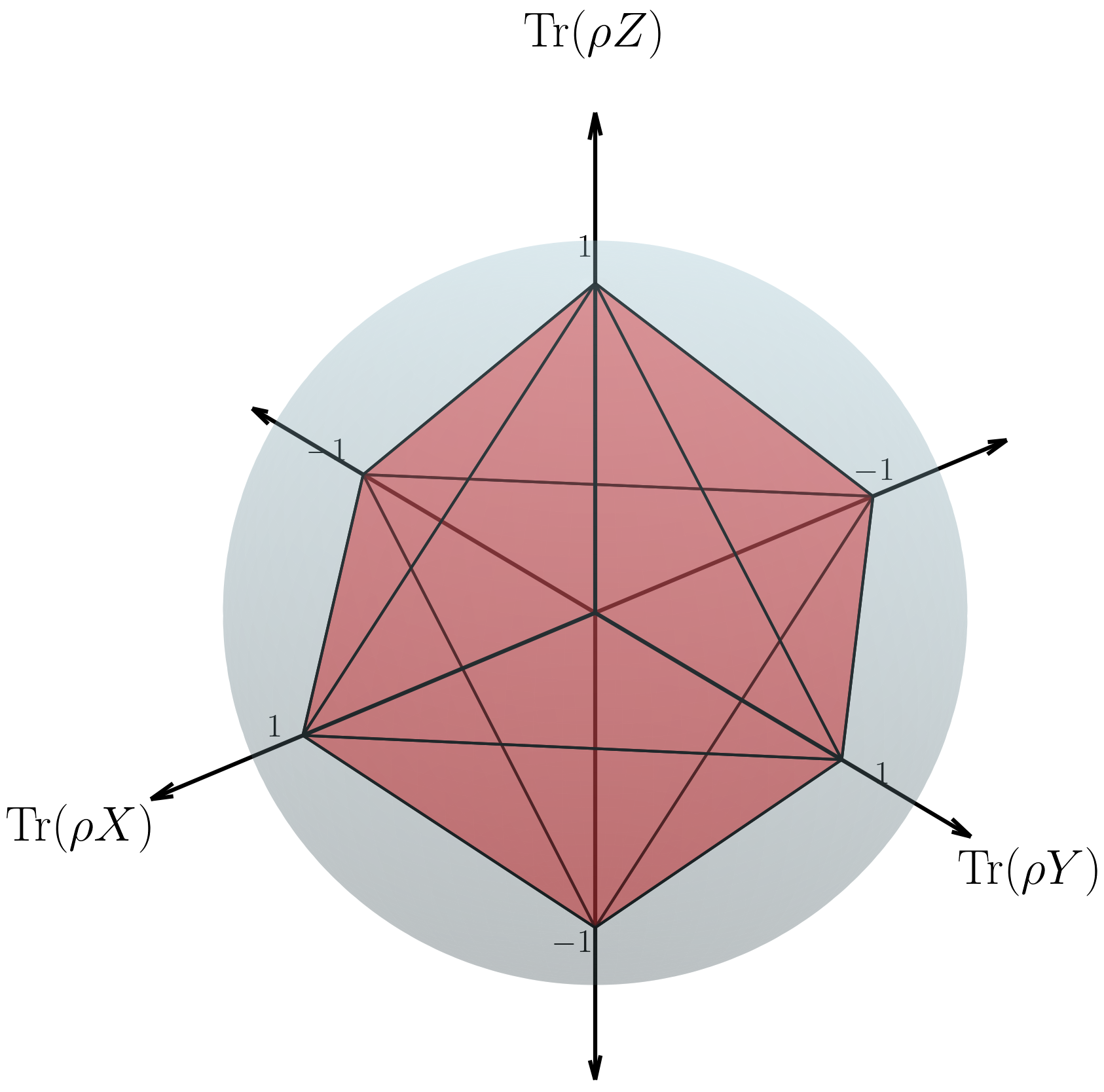} 
    \caption{\raggedright The $P^{\text{STAB}}_{2,1}$ polytope inscribed in the Bloch sphere. }
    \label{fig:BS_stab1} 
\end{figure}

These inequalities can be classified in equivalence classes by applying the natural symmetries of the stabilizer polytope described by Clifford unitaries~\cite{Heinrich2019robustnessofmagic}.  For qudits of  prime dimension $d=p$, Ref.~\cite{PhysRevA.98.032304} shows that the Clifford group is related to the special linear group $SL(2,\mathbb{Z}_p)$, which consists of all  matrices:
\begin{equation}
F= \begin{pmatrix}
a & b \\
c & d
\end{pmatrix}\text{ },
\end{equation}
with elements in  $\mathbb{Z}_p$ and $\det{\left(F\right)} =1$. Remarkably, an element of the Clifford group $C(\boldsymbol{\xi}, \hat{\boldsymbol{F}})$ can be obtained by the expression:
\begin{equation}
    C({\boldsymbol\xi}, \hat{\boldsymbol{F}}) = D_{{\boldsymbol\xi}} V_{\hat{\boldsymbol{F}}}\text{ }, \label{eq:Cliff_qdit}
\end{equation}
where \( D_{\boldsymbol{\xi}} \in \mathcal{P}_{p,1} \) and

\begin{equation}
V_{\hat{F}} =
\begin{cases}
    \frac{1}{\sqrt{p}} \displaystyle \sum_{j,k=0}^{p-1} \omega^{2^{-1}b^{-1}(ak^2 - 2jk + dj^2)} \ket{j} \bra{k}\text{ }, & b \neq 0\text{ }, \\\\
    \frac{1}{\sqrt{p}} \displaystyle\sum_{k=0}^{p-1} \omega^{2^{-1}a c k^2} \ket{ak \bmod d} \bra{k}\text{ }, & b = 0\text{ }.
\end{cases}
\end{equation}
It can be proved that $C\ell_{p,1}$ has $p^3(p^2-1)$ elements. In the following subsections, we will apply these symmetries to the  $P^\mathrm{STAB}_{2,1}$ and $P^\mathrm{STAB}_{3,1}$ polytopes, obtaining the symmetry classes explicitly.
\subsubsection{1 qubit}

\begin{figure}[h!] 
    \centering 
    \includegraphics[width=\columnwidth]{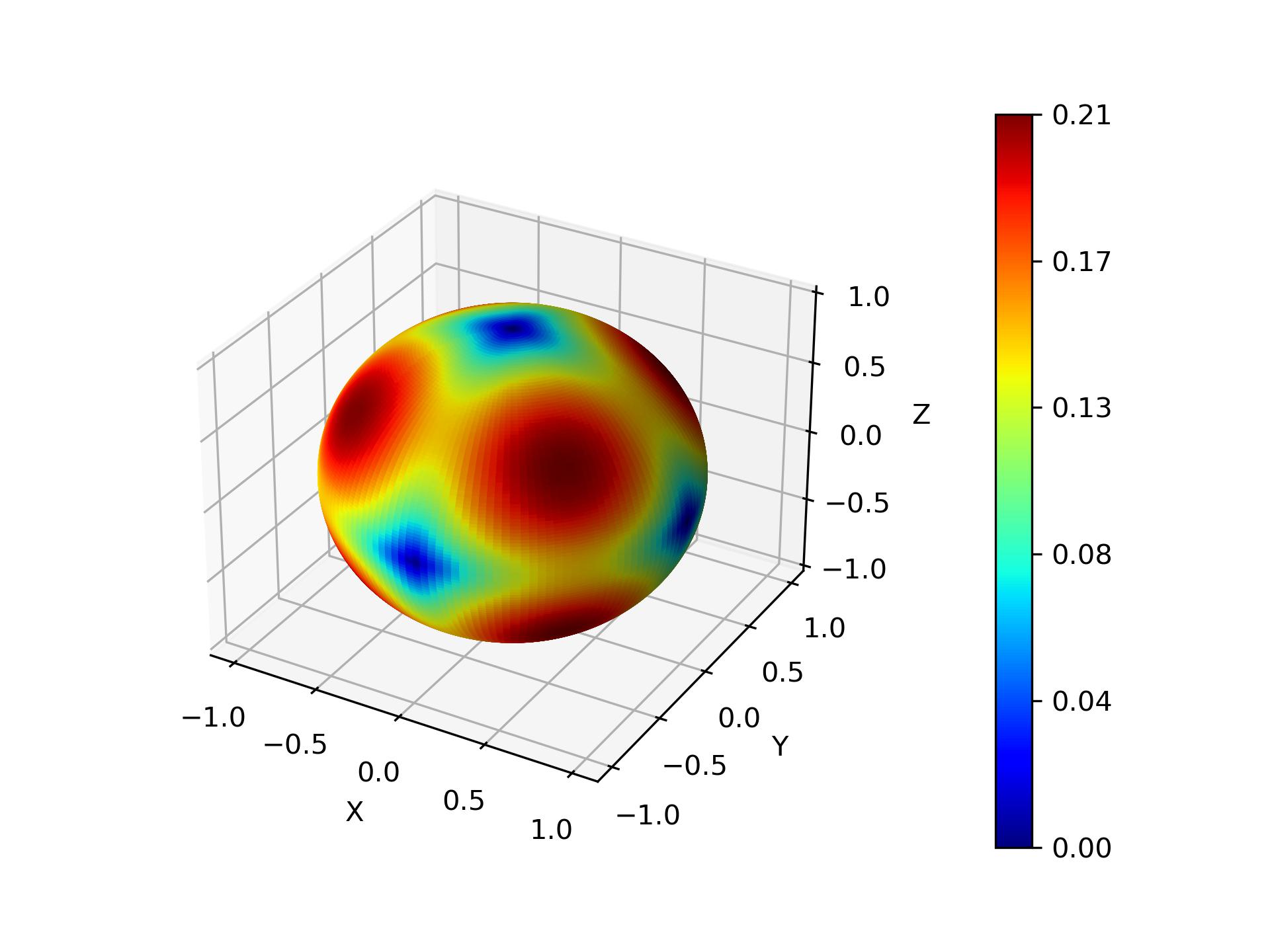} 
    \caption{\raggedright A heat-map of the NTD of the Bloch sphere surface.}
    \label{fig:MTD_heatmap_BS} 
\end{figure}

For qubits, the operators $A^{{\bf q}}$ can be obtained as follows:
\begin{equation}
     \{D_{0,1}, D_{1,0}, D_{1,1}\} = \{Z, X, Y\}\text{ },
\end{equation}
as defined by Eq.~(\ref{eq:Pauli_group}). Note that these operators depend on the projectors of the operators. As an example, we write here one of the eight operators:
\begin{equation}
    A^{(0,0,0)} = -I + \ket{0}\bra{0} + \ket{+}\bra{+} + \ket{i}\bra{i} \text{ }, 
\end{equation}
where  $\ket{0}, \ket{+},\ket{i}$ are the eigenvectors  with eigenvalues $+1$ of the operators $Z,X,Y$ respectively. Each of these eight operators defines a facet of the stabilizer polytope, and by expressing a generic 1-qubit state as:
\begin{equation}   \rho=\sum_{i=0}^3\frac{\braket{\sigma_i}}{2}\sigma_i\quad,\quad\text{with :}\text{ }\begin{cases}
        \sigma_0=I \\
        \sigma_1=X \\
        \sigma_2=Y \\
        \sigma_3=Z
    \end{cases}\quad,
\end{equation}
one can see that the eight facets define an octahedron inscribed in the Bloch sphere (see Fig. \ref{fig:BS_stab1}).

By applying the Clifford unitaries defined in Eq.~(\ref{eq:Cliff_qdit}), each of these facets can be obtained from the inequality:
\begin{equation} \label{eq: 1qbitIneq}
    I_2 = 1 +  \braket{Z} + \braket{X} + \braket{Y} \geq 0 \text{ }. 
\end{equation}
In other words, for the case of 1-qubit, there exists only one equivalence class, illustrated by the representative~(\ref{eq: 1qbitIneq}). 

These inequalities are maximally violated by rotations of the $\ket{T}$ state  (depending on which symmetry is considered) with a density matrix given by:
\begin{equation}
    \rho_T = \frac{1}{2}\left[I+\frac{1}{\sqrt{3}}(X+Y+Z)\right],
\end{equation}
that as mentioned previously has a NTD given by $\mathcal{M}(\rho_T) = 0.211$. In Fig.~\ref{fig:MTD_heatmap_BS}, the red regions represent the states on the surface of the sphere that exhibit the most non-stabilizerness. The state $T$, up to a Clifford unitary, corresponds to the state located at the center of these red areas. In contrast, the blue regions correspond to zones with less non-stabilizerness, with the eigenstates of the Pauli operators situated at their center presenting zero non-stabilizerness.

\subsubsection{1-qutrit}

For $3$ dimensional systems, the operators $A^{{\bf q}}$ in Eq.~(\ref{eq: H-rep}) depend on the projectors of the operators: 
\begin{equation}
     \{D_{0,1}, D_{1,0}, D_{1,1}, D_{1,2}\}\text{ },
\end{equation}
defined by Eq.~(\ref{eq:Pauli_group}). In this situation, there are $81$ inequalities since ${\bf q}\in \mathbb{Z}^4_3$. Writing them analytically does not give much insight into the geometry of the polytope. However, by expressing:
\begin{equation}
\rho=\sum_{i,j=0}^2\frac{\braket{D^\dagger_{i,j}}}{3}D_{i,j}\text{ },
\end{equation}
one can describe the facets of the polytope in terms of the operators $D_{i,j}$ (similarly to the  $1$-qubit case).  With this representation, one can map the inequalities to matrices and work with them numerically. We found that by applying Clifford unitaries [see Eq.~(\ref{eq:Cliff_qdit})], any facet of the polytope can be obtained from the following two inequalities:

\begin{align}
    I^3_1 =&\braket{D^{\dagger}_{00}} +  \braket{D^{\dagger}_{01}} +  \braket{D^{\dagger}_{02}}+ \nonumber\\\ & +\braket{D^{\dagger}_{10}} + \braket{D^{\dagger}_{11}} + \braket{D^{\dagger}_{12}}+    \nonumber\\ 
    & +\braket{D^{\dagger}_{20}} + \braket{D^{\dagger}_{21}}  + \braket{D^{\dagger}_{22}} \geq 0\text{ },\label{eq: Ineq1_qtrits}
\end{align}
\begin{align}
    I^3_2 = &\braket{D^{\dagger}_{00}} +  \braket{D^{\dagger}_{01}} +  \braket{D^{\dagger}_{02}}+ \nonumber\\\ & +\braket{D^{\dagger}_{10}} + \braket{D^{\dagger}_{11}} + \omega\braket{D^{\dagger}_{12}}+    \nonumber\\ 
    & +\braket{D^{\dagger}_{20}} + \omega^* \braket{D^{\dagger}_{21}}  + \braket{D^{\dagger}_{22}} \geq 0\text{ }.\label{eq: Ineq2_qtrits}
\end{align}

Note that these two equivalence classes are also found in the context of  $\Lambda$ polytopes and contextuality \cite{Zurel2024hiddenvariablemodel, Delfosse_2017}. The state that maximally violates Ineq.~(\ref{eq: Ineq1_qtrits}) is the so-called strange state (see Ref.~\cite{Veitch_2014}) given by:
\begin{equation}
    \ket{\mathbb{S}} = \frac{1}{\sqrt{2}}(\ket{1}-\ket{2})\text{ },
\end{equation}
It gives the maximal violation of $I^3_1 = -1$. On the other hand, the state that maximally violates  Ineq.~(\ref{eq: Ineq2_qtrits}) is the pure state:
\begin{equation}
   \ket{\mathbb{S}_2} =  (0.17-i0.07)\ket{0} -(0.67+i0.18)\ket{1}+0.69\ket{2} \text{ },
\end{equation}
which we could not identify in the literature. It reaches the minimum value $I^3_2 =1/2-\sqrt5/2\approx  -0.618$.  

Interestingly, the Norrell state  defined as (see Ref.~\cite{Veitch_2014}):
\begin{equation}
    \ket{\mathbb{N}} = \frac{1}{\sqrt{6}}(-\ket{0}+2\ket{1}-\ket{2})\text{ },
\end{equation}
violates both inequalities by the same amount $I^3_1=I^3_2=-0.5$. Also, it is worth noting that $\ket{\mathbb{S}}$ gives the same violation when evaluating it in the second inequality. In Ref.~\cite{Veitch_2014}, is shown that $\ket{\mathbb{S}}$ and $\ket{\mathbb{N}}$ have the same sum negativity. 

Calculating the NTD for the states $\ket{\mathbb{S}}$, $\ket{\mathbb{S}_2}$ and $\ket{\mathbb{N}}$ we obtain $0.5$, $0.447$, and $1/3$ respectively.
\subsubsection{2-qubits}

For 2-qubits, the facets of the stabilizer polytope can be obtained as follows. Begin by performing a facet enumeration using the LRS software \cite{avis2001lrs, avis2002canonical}, which produces a raw output of $22,320$ inequalities, along with the condition $\braket{I}=1$. For two-qubit systems, the definition in Eq.~(\ref{eq:Cliff_qdit}) no longer applies. Instead, we refer to Ref.~\cite{PhysRevA.87.030301}, where the Clifford unitaries for two-qubit systems are derived and classified as follows:
\begin{enumerate}
    \item \textbf{Single-qubit Class.} This class consists of $576$ elements and is defined by all unitaries of the form:
    \begin{equation}
        (h_0\otimes h_1)(v_0\otimes v_1)(p_0\otimes p_1)\text{ }.
    \end{equation}
    \item \textbf{CNot Class.} This class consists of 5,184 elements, given by the following expression:
    \begin{equation}
        (h_0\otimes h_1)(v_0\otimes v_1)CX_{0,1}(v'_0\otimes v'_1)(p_0\otimes p_1)\text{ }.
    \end{equation}
    \item  \textbf{i-SWAP Class.} This class also contains 5,184 elements, defined by: 
    \begin{equation}
        (h_0\otimes h_1)(v_0\otimes v_1)CX_{0,1}CX_{1,0}(v'_0\otimes v'_1)(p_0\otimes p_1)\text{ }.
    \end{equation}
     \item \textbf{SWAP class.} This class consists of $576$ elements, given by the following expression:
    \begin{equation}
        (h_0\otimes h_1)(v_0\otimes v_1)CX_{0,1}CX_{1,0}CX_{0,1}(p_0\otimes p_1)\text{ }.
    \end{equation}
\end{enumerate}
In the expressions above, $h_0,h_1 \in \{I,H\}, v_0,v_1\in\{I,V,V^2\}$ and $p_0,p_1 \in \{I,X,Y,Z\} $ with $V=(HS)^2$, $H$ being  the Hadamard gate and $S$ the phase gate.  The 2-qubit Clifford group is thus composed of the above $11,520$ elements.

Any element of the 22,320 inequality set can be obtained by applying a Clifford unitary from one of the eight inequalities shown in Table \ref{tab:2Qbits_Stab_ineqs}. This reproduces the $8$ inequalities  obtained in Ref.~\cite{10.5555/2012098.2012105}. They are different representatives of the same classes, related to those we present here by a Clifford transformation. Also, for a description of the polytope from the point of view of $\Lambda$ polytopes, we refer to \cite{Heimendahl2019, ipek2023}.

\begin{table*}[ht!]
\centering
\small
\renewcommand{\arraystretch}{1.5} 
\resizebox{\textwidth}{!}{
\begin{tabular}{|c|c|c|c|c|c|c|c|c|c|c|c|c|c|c|c|c|}
\hline
& $\langle I\otimes I\rangle$ & $\langle I\otimes X\rangle$ & $\langle I\otimes Y\rangle$ & $\langle I\otimes Z\rangle$ &$ \langle X \otimes I\rangle$ & $\langle X \otimes X\rangle$ & $\langle X \otimes Y\rangle$ & $\langle X \otimes Z\rangle$ & $\langle Y \otimes I\rangle$ &$\langle Y \otimes X\rangle$  & $\langle Y \otimes Y\rangle$  & $\langle Y \otimes Z\rangle$ & $\langle Z\otimes I\rangle$ & $\langle Z\otimes X\rangle$ & $\langle Z\otimes Y\rangle$ & $\langle Z\otimes Z\rangle$   \\
\hline
$I^{2,2}_1$ & 1  &  1  &  0  &  0  &  0  &  0  & -1  &  1  &  1  & 1   &  0  &  0  &  0  &  0  & -1  & -1 \\
$I^{2,2}_2$ & 1  &  1  & -1  &  0  &  0  &  0  &  0  &  1  &  0  &  0  &  0  & -1  &  0  &  0  &  0  & -1 \\
$I^{2,2}_3$ & 2  &  1  &  1  &  1  &  1  &  0  &  0  &  2  &  0  &  1  & -1  & 1   & -2  & -1  &  -1  &  -1\\
$I^{2,2}_4$ & 2  &  0  &  2  &  0  &  0  &  0  &  0  &  2  & -1  &  1  & -1  &  1   & -1  & -1  & -1  & -1 \\
$I^{2,2}_5$ & 2  &  0  &  0  & -1  & -1  &  1  & -1  &  2  &  0  &  2  &  0  &  -1  &  0  &  0  & -2  & -1 \\
$I^{2,2}_6$ & 2  &  0  &  1  &  0  &  1  &  1  &  0  & -1  & -1  &  1  &  0  &  1   &  -1 & -1  & 0  & -1 \\
$I^{2,2}_7$ & 3  & -1  &  0  & -2  & -2  &  2  & -1  &  3  &  0  &  2  &  1  &  -1  &  1  &  1  & -2  & -2 \\
$I^{2,2}_8$ & 4  &  3  &  1  &  3  &  2  &  1  & -1  &  3  &  2  &  3  & -1  &   1  & -3  & -2  & -2  & -2\\
\hline
\end{tabular}
}
\caption{\raggedright All facets of the 2-qubit stabilizer polytope can be obtained by applying a Clifford unitary to one of the 8 inequalities presented in this table.}\label{tab:2Qbits_Stab_ineqs}
\end{table*}
In the table, each row $r$ has the coefficients $\alpha_{i,j}$ of the inequality  $I_r^{2,2}$ defined as:
\begin{equation}
    I_r^{2,2} = \sum_{i,j}^4\alpha^r_{i,j} \langle P_i\otimes P_j\rangle  \geq 0\text{ },
\end{equation}
where  $P_0=I, P_1=X, P_2 =Y$ and $P_3 = Z$. As an example, the second row in Table \ref{tab:2Qbits_Stab_ineqs} represents the inequality:
\begin{align}
       \braket{I\otimes X} - \braket{I\otimes Y} + \braket{X\otimes Z}- \nonumber \\ \braket{Y\otimes Z} - \braket{Z\otimes Z} +1\geq 0\text{ }.
\end{align}

\subsection{Quantifying Non-Stabilizerness}\label{subsec:magic}

Following the discussion from the previous section, one needs to explicitly parameterize states in $\mathcal M_{d,n}$ regarding their non-stabilizerness. In this section, we will introduce the NTD measure and compare it with other established measures of non-stabilizerness: RoM \cite{howard2017application}, the stabilizer Rényi entropy (SRE) \cite{leone2022stabilizer}, and also understand its robustness to depolarization noise. 

As will be clear in the following, there are several reasons why we adopt the trace distance to the stabilizer polytope as our measure of non-stabilizerness. First, it offers a clear geometric meaning, quantifying the distinguishability of a quantum state from the set of stabilizer states under optimal measurements. Importantly, unlike many recently proposed measures, the trace distance is naturally defined for both pure and mixed states, allowing us to explore the structure of the full quantum state space. Though computationally intensive for large systems, it remains tractable for the few-qubit scenarios we study here and can be efficiently computed via semidefinite programming. Moreover, we show that this measure connects directly to the violation of facet-defining stabilizer inequalities, providing a natural link between non-stabilizerness quantification and the geometry of the stabilizer polytope.

\begin{figure*}[ht!]
    \centering

    \begin{subfigure}{0.32\textwidth}
        \centering
        \includegraphics[width=\textwidth]{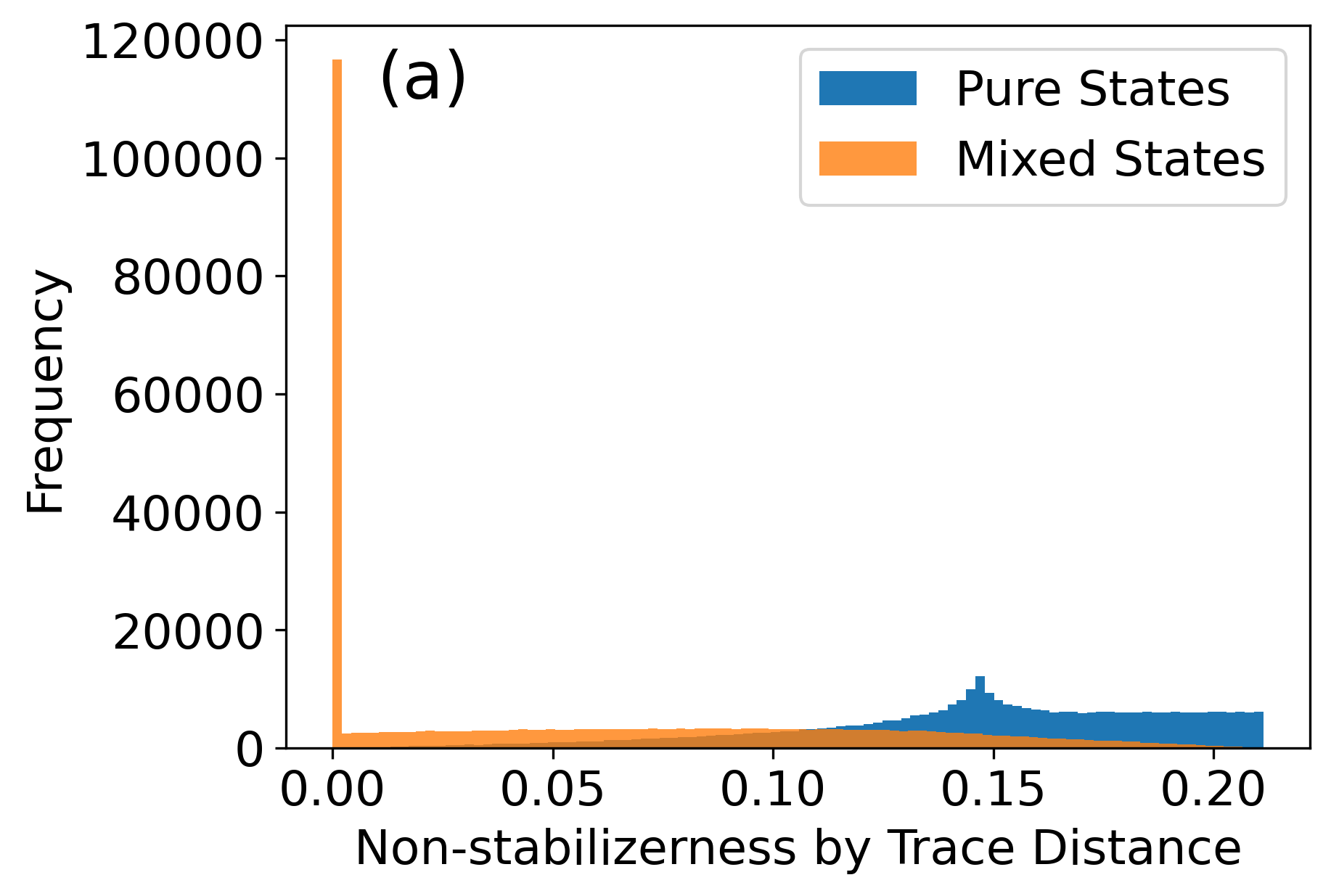}
    \end{subfigure}
    \hfill
    \begin{subfigure}{0.32\textwidth}
        \centering
        \includegraphics[width=\textwidth]{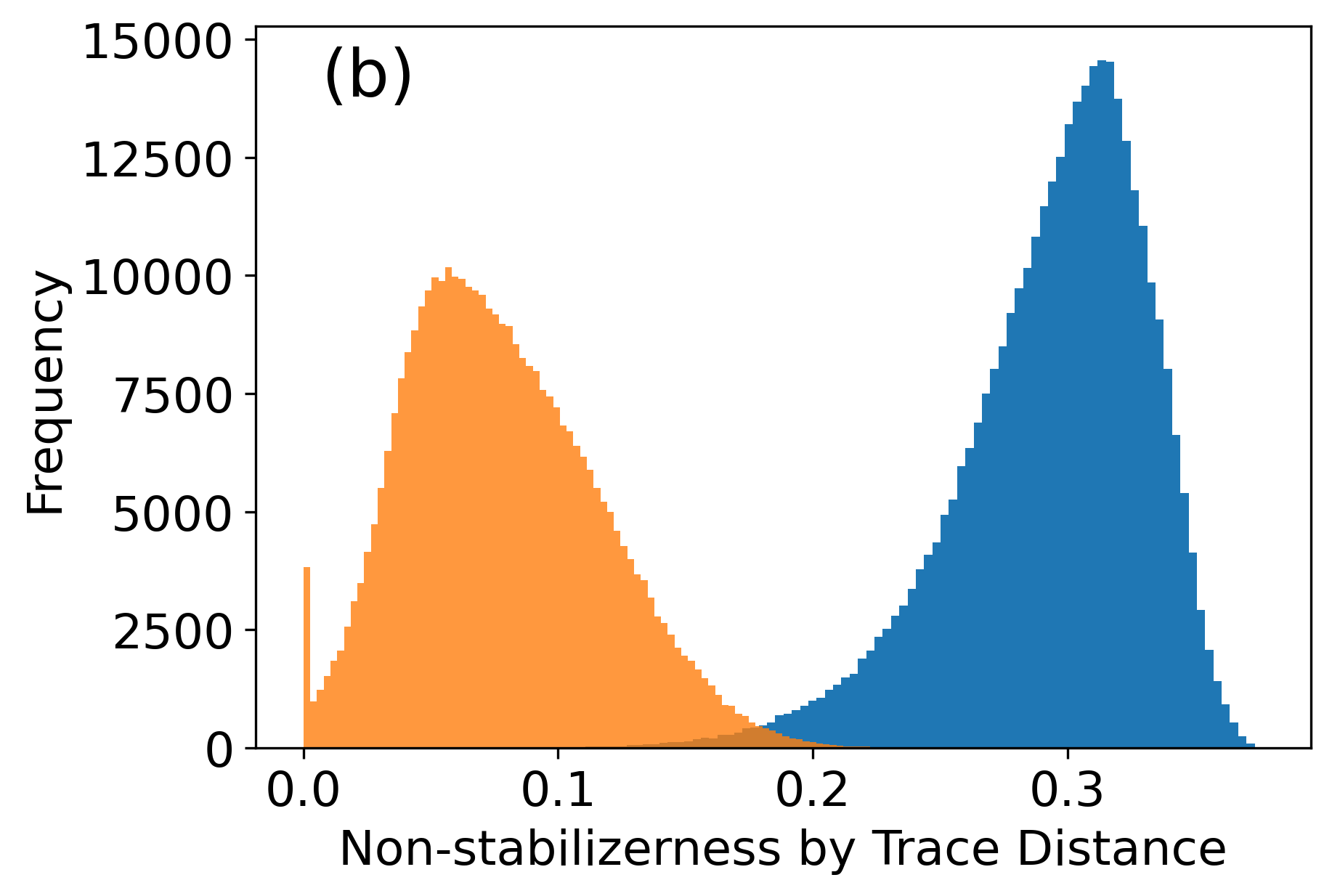}
    \end{subfigure}
    \hfill
    \begin{subfigure}{0.32\textwidth}
        \centering
        \includegraphics[width=\textwidth]{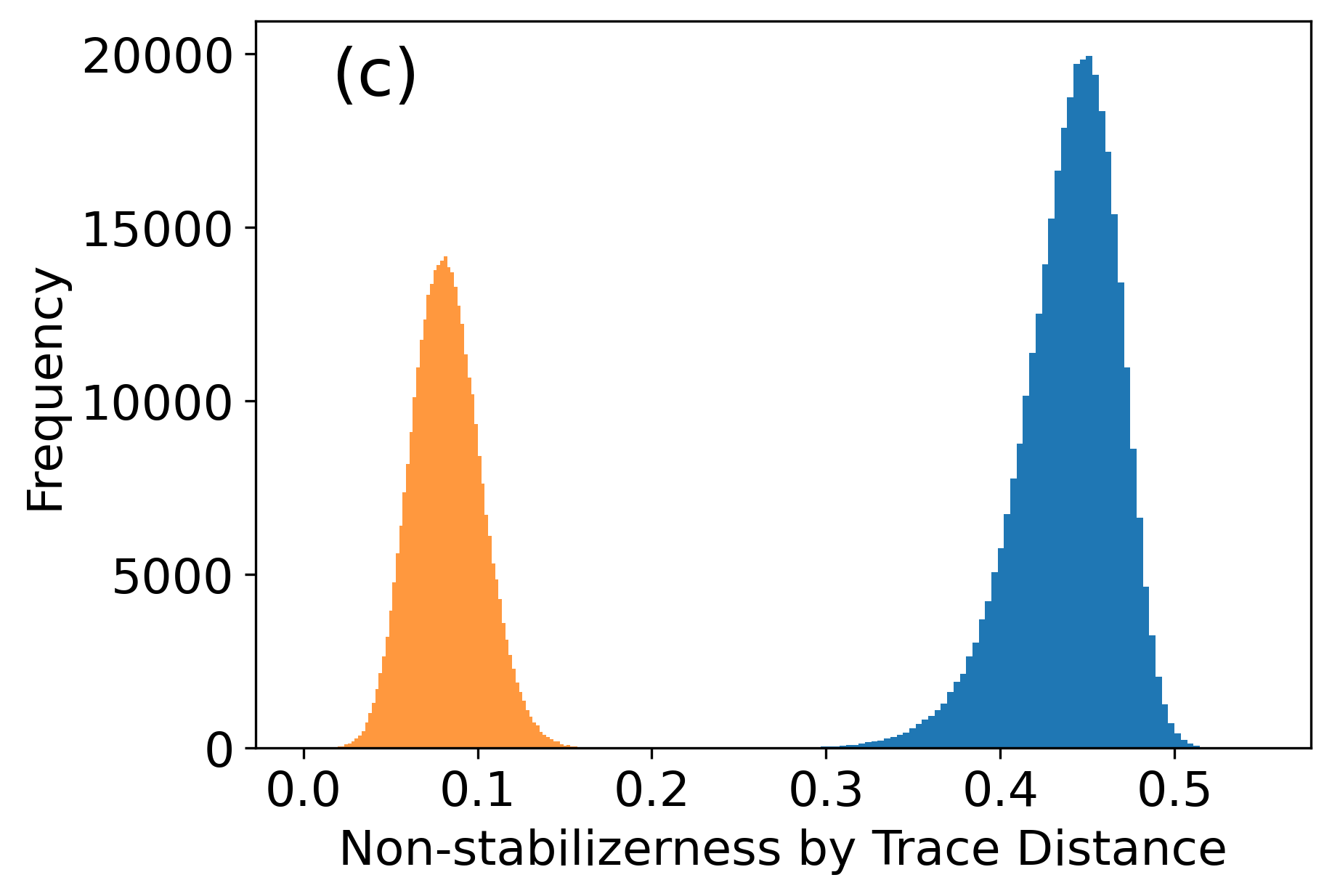}
    \end{subfigure}
   \caption{\raggedright NTD histograms of $360{,}000$ randomly sampled pure (blue) and mixed (orange) states for \textbf{(a)} 1-qubit, \textbf{(b)} 2-qubit, and \textbf{(c)} 3-qubit systems.}

    \label{fig:mtd_hist_all}
\end{figure*}

\subsubsection{Non-stabilizerness by Trace Distance}\label{subsubsec:mtd}

We define the NTD of $\rho$ as the minimum distance from $\rho$ to the stabilizer polytope $P^{\text{STAB}}_{d,n}$:
\begin{equation}
    \mathcal{M}(\rho)=\min_{\sigma\in P^{\text{STAB}}_{d,n}} T(\rho, \sigma) = \min_{\sigma\in P^{\text{STAB}}_{d,n}}\frac {1} {2} \| \rho-\sigma \|_1\text{ }. \label{eq: Magic_TD}
\end{equation}
 By considering Eq.~(\ref{eq: stab_polytope_V}), the numerical problem of determining $\mathcal{M}(\rho)$ for a given $\rho$ can be expressed as the following semi-definite program (SDP) \cite{skrzypczyk2023semidefinite}:
\begin{equation}\label{eq:SDP_MDistance-2}
\begin{split}
 &\min_{\mathbf p = (p_\mathcal S)_\mathcal S} \frac {1} {2} \| \rho-\sigma \|_1\text{ },  \\
\text{subject to}&\\
&\sigma = \sum_{\mathcal S \in \mathrm{STAB}_{d,n}} p_\mathcal S |\mathcal S \rangle \langle \mathcal S|\text{ },\\
& \mathbf p \geq 0\text{ }, \\
& \sum_{\mathcal S \in \mathrm{STAB}_{d,n}}p_\mathcal S = 1\text{ },
\end{split}
\end{equation}
where $\mathbf p \in \mathbb R^{|\mathrm{STAB}_{d,n}|}$ is optimized over $\mathbb R^{|\mathrm{STAB}_{d,n}|}$ and $\mathbf p \geq 0$ denotes element-wise positivity. 
Note that even if the constraints are on the coefficients $p_\mathcal S$, the optimization problem is still an SDP due to the trace norm optimization which its reformulation as an SDP reads \cite{skrzypczyk2023semidefinite}:
\begin{align}
&\|Y\|_1 = \min_{X}  \quad  \, \text{Tr}(X) \\
\text{subject to} & \quad -X \preceq Y \preceq X,\nonumber
\end{align}
where $X$ is an auxiliary positive semi-definite operator, and $Y$ denotes the arbitrary operator whose trace norm is being computed. While this formulation is theoretically direct, we acknowledge that the number of variables $p_S$ grows astronomically with the system size, making this approach computationally intractable in practice.
Nonetheless, as with other trace distance-based resource measures (e.g., in entanglement and Bell nonlocality), the trace distance has a clear and well-motivated geometric interpretation: it quantifies how distinguishable a given state $\rho$ is from the set of stabilizer states under optimal measurements. In other words, it answers the question: What is the stabilizer state $\sigma$ that is hardest to distinguish from $\rho$, when optimizing over all possible measurements?
Furthermore, while the NTD may lack an a priori operational interpretation in terms of simulation cost in the context of quantum computation, it nonetheless offers indirect insight into classical simulability. States with $ \mathcal{M}(\rho) = 0$ lie inside the stabilizer polytope and are thus classically simulable under the Aaronson-Gottesman algorithm \cite{aaronson2004improved}. 

Conversely, as will be discussed later on, a significant amount of noise (through local or global depolarization) is required to push generic non-stabilizer states into the stabilizer polytope, suggesting that NTD can serve as a proxy for the resilience of quantum states against classical simulability.

The NTD inherits all the well-known properties of the trace distance operation~\cite{nielsen2010quantum}, such as its contractiveness under CPTP maps, its invariability under unitary transformations, and its convexity:
\begin{lemma}
   The NTD  satisfies the following properties:
   \begin{enumerate}
    \item Faithfulness: $\mathcal{M}(\rho) = 0$, if and only if $\rho$ is stabilizer, otherwise, $\mathcal{M}(\rho)>0$,
    \item Monotonicity: for all trace-preserving stabilizer channels $\varepsilon$, $\mathcal{M}(\varepsilon(\rho)) \leq \mathcal{M}(\rho)$,
    \item Convexity: $\mathcal{M}(\sum_i p_i\rho_i) \leq \sum_i p_i\mathcal{M}(\rho_i)$, for $p_i \geq 0$ and $\sum_i p_i = 1$.
    \item Invariance under Clifford Unitaries: $\mathcal{M}(C\rho C^{\dagger}) = \mathcal{M}(\rho)$, for any Clifford unitary $C$.
\end{enumerate}
\end{lemma}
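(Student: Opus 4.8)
The plan is to derive each of the four properties directly from the definition $\mathcal{M}(\rho) = \min_{\sigma \in P^{\mathrm{STAB}}_{d,n}} T(\rho,\sigma)$ together with the standard metric and contractivity properties of the trace distance $T$, using only two structural facts about the stabilizer polytope: it is a convex set, and it is invariant as a set under conjugation by any Clifford unitary (equivalently, under any trace-preserving stabilizer channel). These facts were established in Sec.~\ref{subsec:stab}.

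First, for \emph{faithfulness}: since $P^{\mathrm{STAB}}_{d,n}$ is a closed set (the convex hull of the finite set $\mathrm{STAB}_{d,n}$) and $T$ is a genuine metric on $\mathcal{D}(\mathcal{H}_{d,n})$, the minimum is attained and equals zero precisely when $\rho \in P^{\mathrm{STAB}}_{d,n}$, which by definition means $\rho$ is a stabilizer state; otherwise the distance is strictly positive. For \emph{invariance under Clifford unitaries}: using unitary invariance of the trace norm, $T(C\rho C^\dagger, \sigma) = T(\rho, C^\dagger \sigma C)$; as $\sigma$ ranges over $P^{\mathrm{STAB}}_{d,n}$ so does $C^\dagger \sigma C$ (the polytope is Clifford-invariant, since Clifford unitaries permute the vertices $|\mathcal{S}\rangle\langle\mathcal{S}|$ and hence preserve the convex hull), so the two minimizations coincide. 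For \emph{monotonicity}: let $\varepsilon$ be a trace-preserving stabilizer channel and let $\sigma^\star$ achieve the minimum for $\rho$. By contractivity of the trace distance under CPTP maps, $T(\varepsilon(\rho), \varepsilon(\sigma^\star)) \le T(\rho, \sigma^\star) = \mathcal{M}(\rho)$; since $\varepsilon$ maps stabilizer states to (mixtures of) stabilizer states, $\varepsilon(\sigma^\star) \in P^{\mathrm{STAB}}_{d,n}$, hence $\mathcal{M}(\varepsilon(\rho)) \le T(\varepsilon(\rho),\varepsilon(\sigma^\star)) \le \mathcal{M}(\rho)$.

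For \emph{convexity}: given $\rho = \sum_i p_i \rho_i$, pick minimizers $\sigma_i^\star \in P^{\mathrm{STAB}}_{d,n}$ for each $\rho_i$ and set $\sigma = \sum_i p_i \sigma_i^\star$, which lies in $P^{\mathrm{STAB}}_{d,n}$ by convexity of the polytope. Then joint convexity of the trace distance gives $\mathcal{M}(\sum_i p_i \rho_i) \le T(\sum_i p_i \rho_i, \sigma) \le \sum_i p_i T(\rho_i, \sigma_i^\star) = \sum_i p_i \mathcal{M}(\rho_i)$.

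None of the steps presents a genuine obstacle; the proof is essentially bookkeeping that transfers known properties of $T$ through the $\min$. The only point requiring a little care is making explicit that a trace-preserving stabilizer channel sends $P^{\mathrm{STAB}}_{d,n}$ into itself (so that $\varepsilon(\sigma^\star)$ is a legitimate feasible point in the monotonicity argument) — this is immediate from the definition of stabilizer operations but should be stated, and it is the place where the ``stabilizer'' hypothesis on $\varepsilon$ is actually used rather than mere CPTP-ness.
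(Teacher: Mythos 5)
Your proposal is correct and follows essentially the same route as the paper's Appendix A: faithfulness from the metric property and closedness of the polytope, monotonicity from contractivity plus the fact that a stabilizer channel maps the polytope into itself, convexity from (joint) convexity of the trace distance together with convexity of the polytope, and Clifford invariance from unitary invariance of the trace norm plus Clifford invariance of the stabilizer set. If anything, your treatment of convexity (explicitly choosing per-term minimizers and invoking joint convexity) is slightly more careful than the paper's one-line appeal to convexity in the first argument, but the substance is identical.
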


These properties ensure that the NTD is a well-behaved resource monotone, consistent with the framework of quantum resource theories. The proofs for each statement are provided in Appendix \ref{apd:mtd_prop}. In the next section, we use the  NTD measure to analyze the distribution of non-stabilizer states, and we compare it with other well-known measures of non-stabilizerness. 

While the NTD offers a direct geometric interpretation, it is instructive to compare it against established quantifiers. Below, we briefly review the RoM and SRE, which will serve as benchmarks in our numerical analysis.

\subsubsection{Robustness of Magic} \label{subsubsec:Robustness_of_Magic}
The development of the resource theory of non-stabilizerness can be approached similarly to the well-established resource theory of entanglement, particularly concerning the robustness of entanglement, as investigated in \cite{vidal1999robustness}. In that context, the robustness of a quantum state quantifies how much separable noise can be introduced to the system to transform the state into a separable one. This concept, however, can be extended to non-stabilizerness \cite{howard2017application} as seen below:
\begin{definition} \label{def:RoM}  
     The robustness of magic of a state $\rho$ is defined by:
    \begin{equation} \label{eq:RoM_definition}
        \mathcal{R}(\rho) := \min_{\mathbf x}{\left\{ \sum_{\mathcal S \in \mathrm{STAB}_{d,n}} |x_\mathcal S|\text{ };\text{ } \rho = \sum_{\mathcal S} x_\mathcal S |\mathcal S\rangle \langle \mathcal S| \right\}}\text{ },
    \end{equation}
    where $\sum_\mathcal S x_\mathcal S = 1$, $\mathbf x  \in \mathbb R^{|\mathrm{STAB}_{d,n}|}$ and $\sum_{\mathcal S} x_\mathcal S |\mathcal S\rangle \langle \mathcal S|$ is called a stabilizer pseudomixture, which means that $x_\mathcal S$ can be negative.
\end{definition}
    When $\mathcal{R}(\rho) = 1$, $\rho$ is a stabilizer state, otherwise, $\rho$ has non-stabilizerness. Computationally, the RoM can be found as the solution to the following LP:
    \begin{equation}
        \mathcal{R}(\rho) = \min_{{\bf x}}{||\mathbf{x}||_1}\quad \textrm{ s.t. } Ax = b\text{ },
    \end{equation}
    where $||{\bf x}||_1 = \sum_\mathcal S |x_\mathcal S|$, $b_i = \text{Tr}(P_i\rho)$, $A_{i,j} = \text{Tr}(P_j\sigma_i)$, and $P_i$ is $i$-th Pauli word for $n$-qubits.

\subsubsection{Stabilizer Rényi entropies} \label{subsubsec:Quantum_Rényi_Entropies}
In his influential 1961 work, Alfréd Rényi, building on prior contributions by Feinstein and Fadeev, presented an axiomatic framework for deriving Shannon entropy \citep{shannon1948mathematical}. Rényi demonstrated that by imposing five intuitive conditions on functionals defined over a probability space, one uniquely identifies Shannon entropy. By relaxing one of these conditions, he introduced a broader class of entropies now known as Rényi entropies. The stabilizer Rényi entropies, as Ref. \cite{leone2022stabilizer} puts, is a measure of nonstabilizerness associated with the probability of a given Pauli String representing a given pure state.
\begin{definition} \label{def:Renyi_entropy}

    The stabilizer Rényi entropy of order \( \alpha \in (0, 1) \cup (1, \infty) \) is defined as:
    \begin{equation} \label{eq:Renyi_entropy-definition}
        M_{\alpha}(\ket{\psi}) := \frac{1}{1 - \alpha} \left\{ \log \left[ \sum_{P \in \tilde{\mathcal{P}}_{2,n}} \left(\Xi_{P}(\ket{\psi})\right)^{\alpha} \right]\right\} - \log(2^n),
    \end{equation}
    with $n$ being the number of qubits, $\Xi_{P}$ being
    \begin{equation}
        \Xi_{P}(\ket{\psi}) := \frac{1}{2^{n}}\bra{\psi} P \ket{\psi}^2\text{ },
    \end{equation}
    and $\tilde{\mathcal{P}}_{2,n}$ being the group of all $n$-qubit Pauli strings with $+1$ phases.
\end{definition}

\subsection{Entanglement Classification}\label{subsec:entanglement_class}

The relationship between entanglement and non-stabilizerness is subtle, as these resources capture different aspects of quantum advantage. Highly entangled states can possess zero non-stabilizerness, making them classically simulable despite their strong quantum correlations. Conversely, separable states, also classically accessible, can lie outside the stabilizer polytope, exhibiting non-stabilizerness despite the absence of entanglement. In fact, as we will show in Sec. \ref{sec:distribution}, for few-qubit systems, the state with the highest non-stabilizerness can be separable. Notwithstanding, in the asymptotic limit, when the number of qubits is taken to be large, random states become typical, and several results can be derived due to concentration of measure results \cite{mele2024introduction}.

A plethora of results exists depending on how the randomness is imbued in the state. For a Haar-random state, it is known that both the generic entanglement and non-stabilizerness measures are saturated \cite{hayden2006aspects,leone2022stabilizer, liu2022many}. However, non-stabilizerness measures appear to be typically saturated on settings where entanglement measures are not, such as random matrix product states \cite{chen2024magic}, or in random quantum circuits with $\log n$ depth, as numerically found in \cite{turkeshi2025magic}. Hence, even in the typical regime, the relation between entanglement and non-stabilizerness can have wildly different scalings.

In order to understand the finer structure of their relationship, we once again rely on random samples of $2$ and $3$-qubit states to gain insight into the relation between entanglement and non-stabilizerness. This was previously studied in \cite{iannotti2025entanglement}, when states are uniformly distributed. But, as we will see, states with low non-stabilizerness have limited ranges for their amount of entanglement. Based on this observation, we will prove a general continuity result, stating that states close to the stabilizer polytope should inherit the entanglement properties of the stabilizer states.

Let us briefly recall the notion of entanglement classes. Two states $|\psi_1\rangle, |\psi_2\rangle \in \mathcal H$ are said to be in the same LU (\textbf{L}ocal \textbf{U}nitary) entanglement class (denoted by $|\psi_1\rangle \sim_{\mathrm{LU}} |\psi_2\rangle$) if it exists single-qudit unitaries $U_1, U_2, \ldots, U_n$ such that that transforms $|\psi_2\rangle$ into $|\psi_1\rangle$. That is:
\begin{equation}
    |\psi_1\rangle \sim_{\mathrm{LU}} |\psi_2\rangle \quad :\Leftrightarrow \quad |\psi_1\rangle = U_1 \otimes U_2\otimes \cdots \otimes U_n|\psi_2\rangle\;.
\end{equation}

To see why this definition is sound, notice that if one starts with a product state $|\psi_1 \rangle \otimes |\psi_2 \rangle \otimes \cdots \otimes |\psi_n \rangle$, the unitary action will create another product state. The idea is that the Hilbert space $\mathcal H$ splits into a set of classes, denoted by $[\mathcal H]_\mathrm{LU}$, on which each class is a set of states that can be transformed into each other by local unitaries.

It is known that there are continuous families of entanglement classes, with a rich classification \cite{walter2016multipartite}. If the set of pure states is restricted to be the stabilizers, only a discrete and finite set of states $\mathrm{STAB}_{d,n}$ needs to be classified. 

Since Clifford unitaries are the natural transformations of stabilizer states, one can also define the equivalence class $\sim_\mathrm{LC}$, where two quantum states are equivalent if they are related by a local Clifford. Although it is known that the unitary and Clifford classifications are distinct in general, meaning that $[\mathrm{STAB}_{d,n}]_\mathrm{LC} \neq [\mathrm{STAB}_{d,n}]_\mathrm{LU}$ \cite{van2005local,ji2007lu, claudet2025local}, it is known that the LC classification of stabilizer states is equivalent to the LU one in the case for $n\leq 3$ qudits \cite{tripartite2011looi}. We will focus on the case of $n \leq 3$ qubits.

\subsubsection{2-qubits}
The first result concerns the classification for the entanglement structure of $2$-qubits, already analyzed in Ref. \cite{fattal2004entanglement}. It is shown that there are two $\mathrm{LU}$ classes:
\begin{equation}
    [\mathrm{STAB}_{2,2}]_\mathrm{LC} = \left\{[|00\rangle]_\mathrm{LC} \text{ }\text{ }, \text{ }\left[ \frac{1}{\sqrt 2}(|00\rangle + |11\rangle)\right]_\mathrm{LC}\right\}\text{ },
\end{equation}
meaning that every single stabilizer state can be written as:
\begin{equation}
    (C_1 \otimes C_2) |00\rangle\quad\text{or}\quad(C_1 \otimes C_2) \frac{1}{\sqrt{2}}(|00\rangle+|11\rangle)\text{ }.
\end{equation}
That is, either is a state separable or has maximal entanglement by being unitarily equivalent to a Bell pair. In this case, the natural measure is the entanglement entropy, which is vanishing for the first class and maximal for the second one.

\subsubsection{3-qubits}
The situation becomes slightly richer for $3$-qubits, as discussed in \cite{brayvi2006ghz}. There are five entanglement classes:
\begin{equation}
    \begin{split}
        \left[\mathrm{STAB}_{2,3}\right]_\mathrm{LC} = \Bigg\{&\left[|000\rangle\right]_\mathrm{LC}\quad ,  \\ & \left[\frac{1}{\sqrt 2}(|000\rangle + |110\rangle)\right]_\mathrm{LC}\quad, \\ & \left[\frac{1}{\sqrt 2}(|000\rangle + |101\rangle)\right]_\mathrm{LC}\quad, \\ & \left[\frac{1}{\sqrt 2}(|000\rangle + |011\rangle)\right]_\mathrm{LC}\quad,\\ & \left[\frac{1}{\sqrt 2}(|000\rangle + |111\rangle)\right]_\mathrm{LC}\Bigg\}\;\text{ }.
    \end{split}
\end{equation}
The first class corresponds to separable states, the last to the $3$-qubit GHZ state, and the other three all correspond to a bi-separable state.

\section{Geometric Analysis and Witnesses}\label{ref:geometric_analysis}

Building on the definitions and background provided in the previous section, we now present our main analytical contributions. We derive explicit relations between the trace distance and the violation of facet inequalities for low-dimensional systems, identify the states that maximize these violations, and explore connections to Bell-like inequalities.

\subsection{Analytical Expression for NTD}

Here, we present our main contribution relating the violation of a facet inequality with the NTD, obtaining an analytical expression for the latter. 
\subsubsection{Analytical expression for the NTD of one qubit.} \label{subsec:analytical_expression_TD}
In this section, we obtain explicitly an analytical expression for our measure of non-stabilizerness defined through the trace distance in Eq.~(\ref{eq: Magic_TD}). The program~(\ref{eq:SDP_MDistance-2}) uses the $V$-representation of the polytope, but here we reformulate it in its $H$-representation because this will allow us to relate the distance to the facet analytically with the hyperplanes parallel to the facet.
The SDP for the non-stabilizerness by trace distance of one $d$-dimensional system reads as :
\begin{align}
\mathcal{M}(\rho) = &\min_\sigma \frac {1} {2} \| \rho-\sigma \|_1\text{ }, \label{eq:SDP_MDistance-1}\\
\text{subject to} \nonumber\\
 &\text{Tr}(\sigma A^{\bf{q}}) \geq 0\text{ },\text{ } \forall {\bf q} \in \mathbb{Z}_2^{3}\text{ }, \nonumber\\
 &\text{Tr}(\sigma)=1\nonumber.
\end{align}
where $A^{{\bf q}}$ are the operators defining the facets of the polytope as defined in Eq.~(\ref{eq: H-rep}). We emphasize that solving this problem requires casting it as a Semidefinite Program. The trace distance objective is reformulated into a linear objective with additional linear matrix inequality constraints. While the constraint $\sigma \ge 0$ is mathematically redundant for any $\sigma$ within the stabilizer polytope, the SDP formulation is needed by the trace distance objective, which is the standard method for solving such problems numerically.

What we will do in the following is to consider all the states $\rho$ such that  $\text{Tr}(\rho A^{{\bf q'}}) = c$ for some ${\bf q'}$, that is, we will look at the ``iso-non-stabilizerness'' surfaces of states.
We are interested in the range $c \in [- v_{\text{max.}},0]$ where $- v_{\text{max.}} <0$ corresponds to the maximal violation possible. Since for $1$ qubit all facets are equivalent to $\text{Tr}(\rho A^{{\bf 0}}) =0$, it is enough to consider only the states $\rho$ such that $\text{Tr}(\rho A^{{\bf 0}}) = c$. Then the result obtained for this facet can be generalized for all other facets by symmetry. 

We consider then, the distance from the set of states belonging to the hyperplane $\text{Tr}(\rho A^{{\bf 0}})=c$ to the polytope $P^{\text{STAB}}_{2,1}$. This corresponds to the following minimization problem:
\begin{align}
\mathcal{M}_{lb} = \min_{\sigma,\rho}& \frac {1} {2} \| \rho-\sigma \|_1\text{ }, \label{eq:SDP_MRhoSigmaDIst}\\
  &\text{Tr}(\sigma A^{{\bf q}}) \geq 0\text{ },\text{ } \forall {\bf q} \in \mathbb{Z}_2^{3} \nonumber\\
  &\text{Tr}(\rho A^{{\bf 0}}) = c\text{ },  \nonumber\\
  &\text{Tr}(\rho ) = 1\text{ },  \nonumber \\
  &\rho  \succeq 0\nonumber \text{ }.
\end{align}

This optimization gives a tight lower bound for the trace distance to the stabilizer polytope for all the states that violate the inequality  $\text{Tr}(\sigma A^{{\bf q}}) \geq 0$ by an amount of $c$. One can think of this quantity as a coarse-grained version of $\mathcal{M}$, where one disregards the finiteness of the stabilizer polytope. In general, $\mathcal{M}_{lb}(\rho)=\mathcal{M}(\rho)$ whenever the region of the hyperplane containing the state $\rho$, when projected to the hyperplane containing the facet of the polytope, is actually contained in the facet. If $\rho$ is in a region whose projection is outside of the facet of the polytope $\mathcal{M}_{lb}(\rho)\geq\mathcal{M}(\rho)$ in general.  

By using the usual trick for the norm minimization where $\min \| A\|_1$ can be expressed as $\min \text{Tr}(t)$, such that $t\succeq A\succeq-t$, and by  defining the block matrix:
\begin{equation}
X = \begin{bmatrix}
\rho & 0& 0 \\
0 & t & 0 \\
0 & 0 & \sigma \\
\end{bmatrix},
\end{equation}
one can reduce the problem to a one-variable minimization:
\begin{align}
\mathcal{M}_{lb}= \min_{X}& \frac{1}{2}\text{Tr}(E_5X) \text{ }, \label{eq:SDP_MX}\\
    &\text{Tr}(XE_9) = 1\text{ }, \nonumber\\
  &\text{Tr}(XE_1) = 1\text{ },  \nonumber\\
  &\text{Tr}(XA_1^0) = c\text{ },  \nonumber \\
  &E_1(-X+E_2XE_4+E_3XE_7)  \succeq 0\text{ },  \nonumber \\
   &E_1(X+E_2XE_4-E_3XE_7)  \succeq 0\text{ },  \nonumber \\
  &\text{Tr}(XA_9^i)\geq0\text{ },\text{ } \forall  i= 1,...,8\text{ },\nonumber\\
  &X \succeq 0\nonumber\text{ }, 
\end{align}
where $E_i$ is the block matrix with identity in the $i$-th position and $A_9^i$, is the block matrix with the $ A^{{\bf i}}$ operator in the ninth entry. The process is similar for $A_1^0$. From this, it is straightforward to construct the Lagrangian: 
\begin{align}
    L &= \frac{1}{2}\text{Tr}(E_5X)- y_1[\text{Tr}(XE_9)-1]  \nonumber \\
    & - y_2[\text{Tr}(XE_1)-1]- y_3[c-\text{Tr}(XA_1^0)] \nonumber \\
    &- \text{Tr}\left[Z_1E_1(-X+E_2XE_4+E_3XE_7)\right] \nonumber \\
    &- \text{Tr}\left[Z_2E_1(X+E_2XE_4-E_3XE_7)\right] \nonumber \\
    &- \sum_{i=0}^7z_{3+i}\text{Tr}(XA_q^i)-\text{Tr}(Z_1X)\text{ },
\end{align}
where $y_i,z_i$ and $Z_i$ are the associated dual variables corresponding to the dual program of the SDP~(\ref{eq:SDP_MX}). All $Z_i$ are positive semi-definite, and $z_i\geq0$. This Lagrangian defines the objective function of the dual problem:
\begin{equation}
    -y_1-y_2-cy_3\text{ }. 
\end{equation}
Now, we recall the strong duality theorem:  whenever the primal is strictly feasible (i.e., if we replace all the inequality constraints by strict inequality constraints and the feasible set is still not empty) and bounded, then the value of the dual and of the primal coincide \cite{skrzypczyk2023semidefinite}. Since this is the case for our problem, as there always exist positive definite states $\rho$ and $\sigma$ satisfying  $\text{Tr}(\rho A^{{\bf 0}}) = c$ and  $\text{Tr}(\sigma A^{{\bf 0}}) > c$ leaving  the  norm minimization  bounded for $c \in [- v_{\text{max.}},0]$, then we can see that:
\begin{equation}
    \mathcal{M}_{lb}=\max_{y_i,Z_j,z_k} -y_1-y_2-cy_3\text{ },
\end{equation}
which implies that $\mathcal{M}_{lb}$ is a linear function of $c=\text{Tr}(\rho A^{{\bf 0}})$. Interestingly, when looking at the numerically optimized values, we find that for the interval of interest $c \in [-  v_{\text{max.}},0]$, $-y_1-y_2=0$ and $y_3=\sqrt{3}/3$, which gives the result:

\begin{equation}
    \mathcal{M}_{lb}(\rho)=-\frac{\sqrt{3}}{3}\text{Tr}(\rho A^{{\bf 0}})\text{ }.
\end{equation}
Since all ${\bf q}$ facets are equivalent, for an arbitrary state, one can define the following quantifier as follows:
\begin{equation}
    \mathcal{M}_{lb}(\rho)=-\frac{\sqrt{3}}{3}\min[0,\{\text{Tr}(\rho A^{{\bf q}})\}_{{\bf q}}]\text{ },
\end{equation}
where the minimum is taken over the value of all facets (given $\rho$) and $0$. Since the expectation values $\langle X\rangle$, $\langle Y\rangle$, and $\langle Z\rangle$ correspond to the Bloch vector components $r_x$, $r_y$, and $r_z$, the above expressions provide a geometric interpretation of the NTD as the Euclidean distance from $\Vec{r}$ to the facet of the stabilizer octahedron. This contrasts with the RoM, which uses the $\ell_1$-norm \cite{seddon2021quantifying}. This geometric interpretation of the NTD is easily seen from:
\begin{equation}
    \begin{split}
        \mathcal{M}_{lb}(\rho) &= -\frac{\sqrt{3}}{3}\text{Tr}(\rho A^{{\bf 0}}) \\
        &= -\frac{1}{\sqrt{3}}\text{Tr}\left[\rho (1 + X + Y + Z)\right] \\
        &= -\frac{1}{\sqrt{3}}\left[\text{Tr}(\rho) + \text{Tr}(\rho X) + \text{Tr}(\rho Y) + \text{Tr}(\rho Z)\right] \\
        &= -\frac{1}{\sqrt{3}}[1 + r_x + r_y + r_z].
    \end{split}
\end{equation}

which is the Euclidean distance from the point $\Vec{r}$ to the octahedron plane with a minus sign.

\begin{figure}[ht!]
    \centering
    \includegraphics[width=\columnwidth]{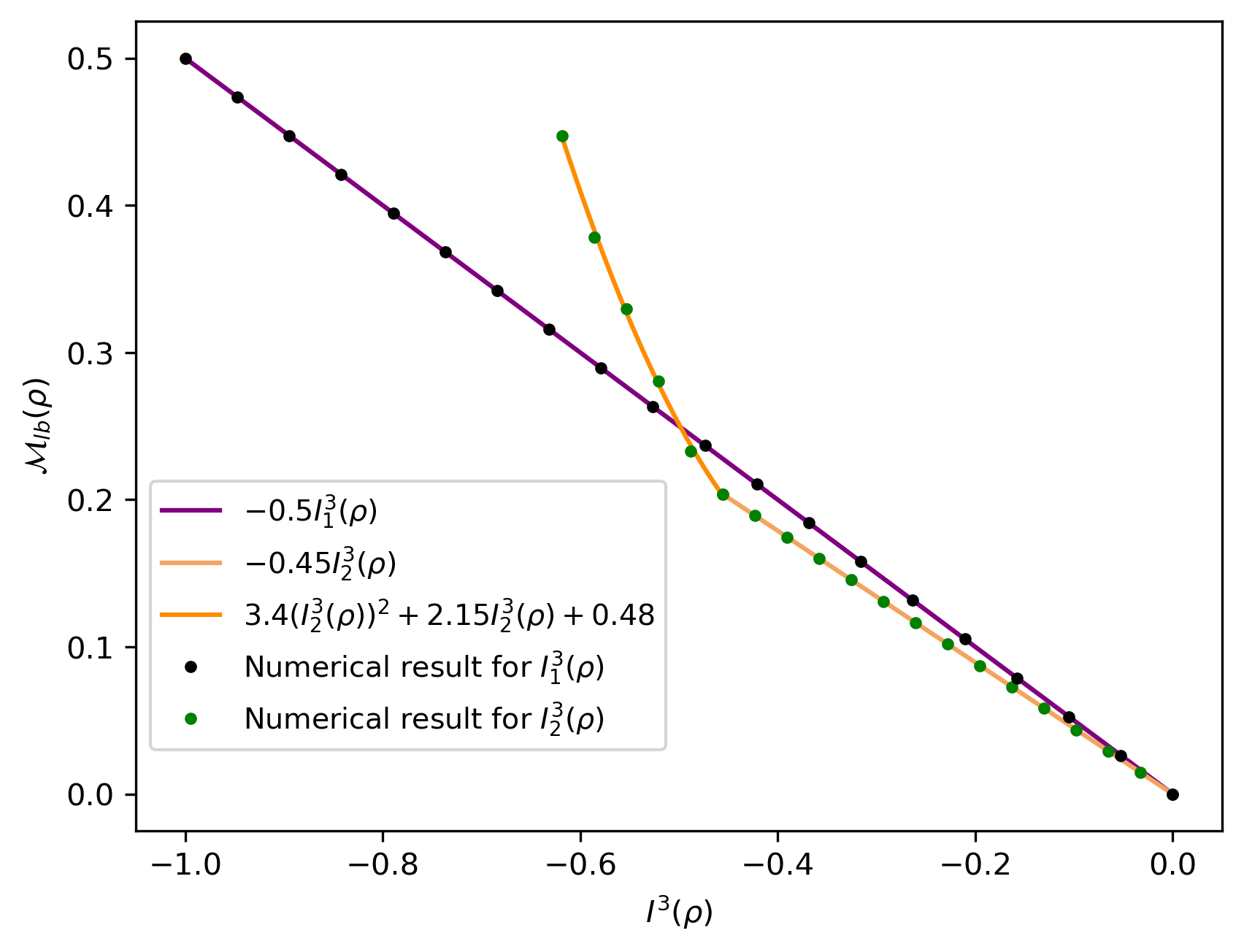}
    \caption{\raggedright Tight lower bound of $\mathcal{M}(\rho)$ restricted to  a value  $I^3_i(\rho)$ for $i=1,2$. In this figure, we show how the numerical approach allows to get an analytical expression for the SDP [see Eq. (\ref{eq: Magic_TD})]. The dots are the numerical result of the SDP by restricting the states to satisfy a certain value of the witnesses $I^3$ given by Eq. (\ref{eq: Ineq1_qtrits}) (black dots) and Eq.~(\ref{eq: Ineq2_qtrits}) (green dots). We show the corresponding fitting in the legend, giving the analytical expression for the trace distance in terms of $I^3$.}
    \label{fig: Qutrits_Analytical_TD}
\end{figure}

\subsubsection{NTD of one qutrit.}
Now we relate the violation of the witnesses Eq.~(\ref{eq: Ineq1_qtrits}) and Eq.~(\ref{eq: Ineq2_qtrits}) to the NTD for a qutrit.  The dimension of the problem is large enough to make the analytical approach intractable. Therefore, we use a numerical approach in which we perform an analogous optimization to the SDP~(\ref{eq:SDP_MRhoSigmaDIst}), but now restricting $\sigma\in P_{3,1}^{STAB}$ and considering the violation $I^3_i = c_i$  for $c_{i} \in [-mV(I^3_{i}), 0]$ where $-mV(I^3_{i})$ is the maximum violation of $I^3_i$, $i=1,2$. In Fig.~\ref{fig: Qutrits_Analytical_TD}, we show a plot with the results. 

The result implies that  for Eq.~(\ref{eq: Ineq1_qtrits})  one can write $\mathcal{M}(\rho)$ as:
 \begin{equation}
    \mathcal{M}_{lb}\left[I^3_1(\rho)\right]=-\frac{1}{2}I^3_1(\rho)\text{ },
\end{equation}
and, for Eq.~(\ref{eq: Ineq2_qtrits}):
\begin{equation}
     \mathcal{M}_{lb}\left[I^3_2(\rho)\right]= 
        \begin{cases}
           -\frac{1}{\sqrt{5}}I^3_2(\rho)\text{ },& \text{if } -i_0\leq I^3_2(\rho) \leq 0 \\
           P\left[I^3_2(\rho)\right]\text{ }, & \text{if } -i_1\leq I^3_2(\rho) <-i_0
        \end{cases}\text{ },
\end{equation}
 where $i_0 = 0.4554$, $i_1 = 0.618$ and $P\left[I^3_2(\rho)\right] = 3.43\left[I^3_2(\rho)\right]^2 + 2.19\left[I^3_2(\rho)\right]+ 0.49$.
 Considering all the other facets of the polytope that can be obtained through conjugation of Clifford unitaries, we can write the non-stabilizerness for a general state $\rho$ as:
\begin{equation}
    \mathcal{M}_{lb}(\rho)=\min[0,\mathcal{M}(\{I^3_1(\rho)\}_{\mathcal{C}}), \mathcal{M}(\{I^3_2(\rho)\}_{\mathcal{C}})]\text{ },
\end{equation}
where $\{I^3(\rho)\}_\mathcal{C}$ denotes  all the  Clifford symmetries of $I^3(\rho)$.

\subsubsection{Other insights for 2 qbits}

The previous technique extends to the two-qubit case, which features $8$ equivalence classes of facets. For brevity, we omit the analysis as the method is identical to the one previously described.  Instead, we present the following results.

To assess the detection capability of these facets, we sampled $10^6$ mixed states from the Hilbert-Schmidt ensemble using the QuTiP \cite{lambert2024qutip} \href{https://qutip.org/docs/4.0.2/modules/qutip/random_objects.html#rand_dm}{\texttt{rand\_dm}} function, and we plot a histogram showing the percentage of states that violate each facet (or one of its symmetries). We show the results in Fig.~\ref{fig: violation_of_ineqs}. When sampling pure two-qubit states from a Haar ensemble, we find that all of them violate at least one of the symmetries of all $8$ classes of inequalities. From a sample of $10^5$, we also calculated the average number $\overline n_f$ of facets that a $2$ qubit state violates. Pure states violate, on average, $\overline n_f \approx 12677$ facets, while for mixed states this number decreases to $\overline n_f \approx 1228$.

\begin{figure}[ht!]
    \centering
    \includegraphics[width=\columnwidth]{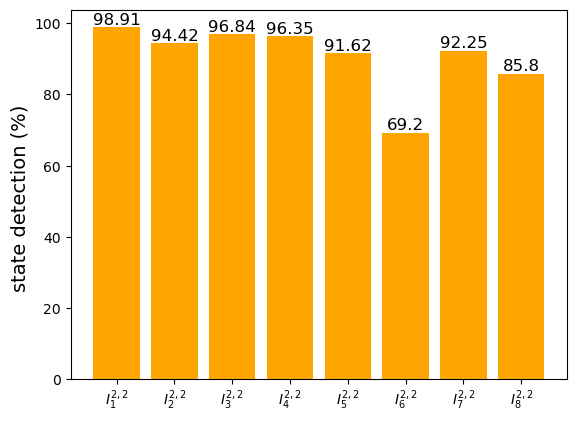}
    \caption{\raggedright Detection capability of the facets of the 2-qubit polytope. The plot shows the percentage of states that violate each facet out of $10^6$ states sampled randomly from the Hilbert-Schmidt ensemble.}
    \label{fig: violation_of_ineqs}
\end{figure}

We also derived the states that maximally violate these facets. Optimizing over all states, we found that the maximal violation of the inequalities $I^{2,2}_r$ is achieved by the states shown in Table~\ref{2Qbits_stab_states}. The optimization can be stated as an SDP, or simply by a brute force optimization over a two-qubit state parameterized as:
\begin{align}\label{psigeral}
        \ket{\{\theta_j,\phi_j\}_{j=1}^3} &= \cos(\theta_1)\ket{00}  \nonumber\\
        &+\exp(i\phi_1)\sin(\theta_1)\cos(\theta_2)\ket{01} \nonumber \\
        &+ \exp(i\phi_2)\sin(\theta_1)\sin(\theta_2)\cos(\theta_3)\ket{10} \nonumber\\
        &+ \exp(i\phi_3)\sin(\theta_1)\sin(\theta_2)\sin(\theta_3)\ket{11},
\end{align}
for $\theta_j \in [0,\pi/2]$ and $\phi_j\in [0,2\pi]$ with $j=1,2,3$. By choosing the latter option, we can obtain the coefficients of the states shown in the first four columns of Table~\ref{2Qbits_stab_states}. For example, the state violating maximally $I^{2,2}_1$ is:
\begin{equation}
    \ket{\psi^{2,2}_1} = \alpha\ket{00} - \beta^* \ket{01} - \beta\ket{10} + i\alpha\ket{11}\text{ },
\end{equation}
where $\alpha =0.628$ and $\beta = 0.23(1+ i)$. The fifth and sixth columns show the maximal violation of the corresponding inequalities and the non-stabilizerness of the state calculated via the NTD measure. It is worth noting that the state $|T\rangle^{\otimes 2}$ (which we identify in Sec. \ref{sec:distribution} as a candidate for maximal NTD via random sampling) has a value of $\mathcal{M}(|T\rangle^{\otimes 2})=0.378$, which is higher than any of the listed states. By evaluating $|T\rangle^{\otimes 2}$ in all inequalities (all symmetries of ineqs. in Table \ref{tab:2Qbits_Stab_ineqs}), we found that indeed the violation of this state is lower than all the values presented in Table \ref{tab:2Qbits_Stab_ineqs}. This means that $|T\rangle^{\otimes 2}$ is located close to a hyperplane containing some facet of the polytope but far away from the polytope. In Fig.~\ref{fig: T2_avocado}, we show an example of how this could happen in $2$ dimensions. 
\begin{figure}[ht!]
    \centering
    \includegraphics[width=0.8\columnwidth]{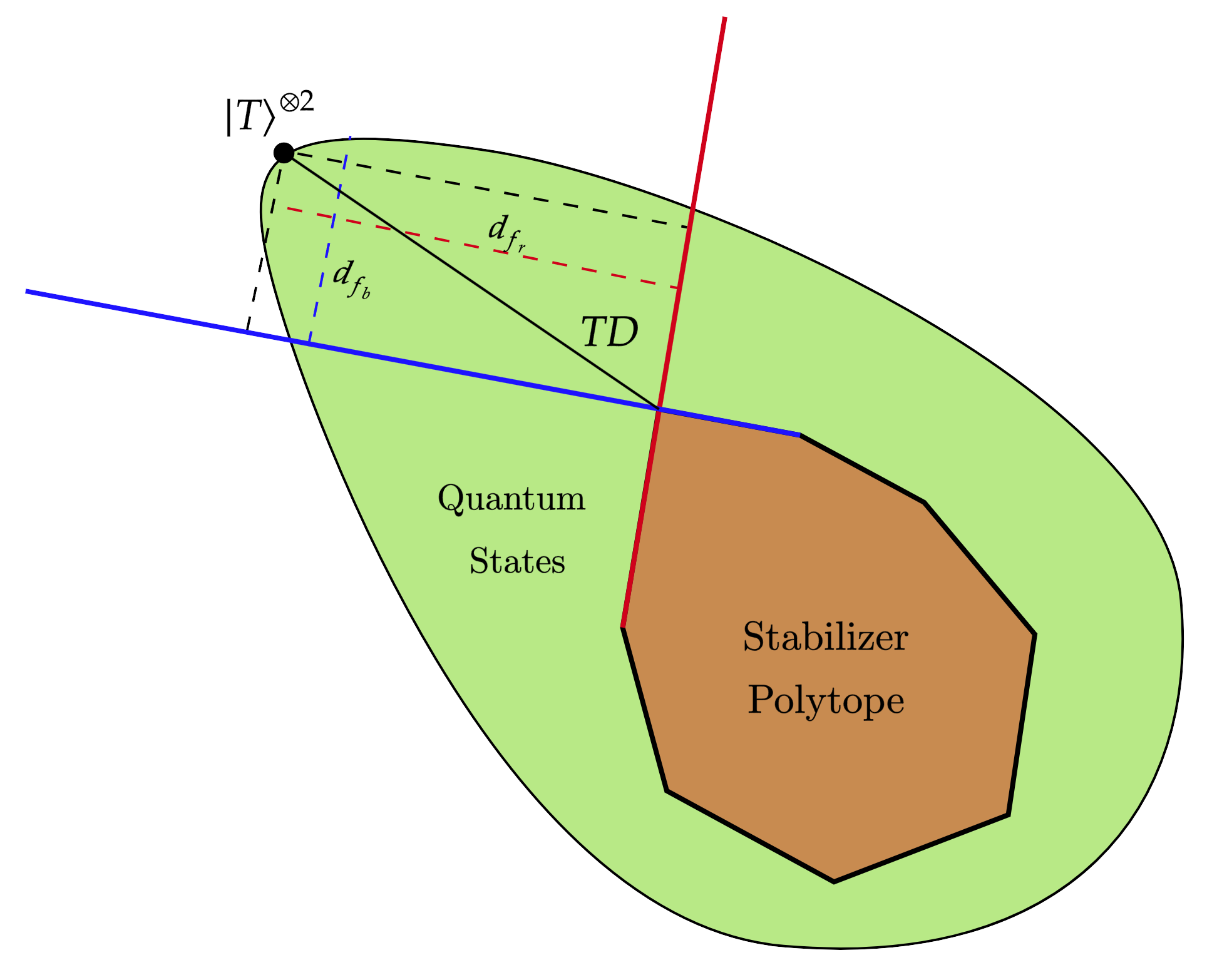}
    \caption{\raggedright An avocado-like geometry, representing a far-away non-stabilizer state close to the facets of the stabilizer polytope. In this pictorial example, we show how  $|T\rangle^{\otimes 2}$ can be the furthest away state from the stabilizer polytope as quantified by the trace distance NTD, but there exist points on the border of the set of quantum states whose distances to the closest blue facet ($d_{f_b}$) and red facet ($d_{f_r}$), are larger than the distance of  $|T\rangle^{\otimes 2}$ to these facets, compare the projections given by the black dashed lines to the blue and red dashed lines. }
    \label{fig: T2_avocado}
\end{figure}
\begin{table*}[ht]
\centering
\small
\renewcommand{\arraystretch}{1.5} 
\resizebox{0.7\textwidth}{!}{
\begin{tabular}{|c|c|c|c|c|c|c|}
\hline
& $\ket{00}$ & $\ket{01}$ & $\ket{10}$ & $\ket{11}$ &  $\text{min}(I^{2,2}_r)$ & $\mathcal{M}$  \\
\hline
$\ket{\psi^{2,2}_1}$ & $0.628 $& $-0.23 + i0.23$ & $-0.23 - i0.23$ &  $i0.628$ & $-1.4641$ & $0.2113$\\
$\ket{\psi^{2,2}_2}$ & $0.773$& $-0.394 + i0.273$ &$ -0.083 + i0.241$ & $-0.255 + i0.248$ & $-1.2361 $  & $0.3577$ \\
$\ket{\psi^{2,2}_3}$ & $0.817$ & $-0.299 - i0.299$ & $-0.335 - i0.09$ & $-0.09 - i0.156$ & $-1.6397$ & $0.3376$\\
$\ket{\psi^{2,2}_4}$ & $0.648$ & $0.185 - i0.535$ & $-0.262 + i0.108$ &$ 0.381 - i0.185 $ & $-1.6171$ & $0.2998$\\
$\ket{\psi^{2,2}_5}$ & $0.765$& $-0.383 + i0.247$ &$ 0.068 + i0.315$ & $-0.315 + i0.068$ & $-1.2915$ & $0.3475$ \\
$\ket{\psi^{2,2}_6}$ & $0.459$ & $0.579 - i0.164$ & $-i0.256$ & $-0.579 + i0.164$ &$-1.8042$ & $0.3304$\\
$\ket{\psi^{2,2}_7}$ & $0.544$ & $-0.272 - i0.035$ & $-0.391 + i0.119 $& $-0.663 + i0.153$ & $-1.746$&$0.2334$\\
$\ket{\psi^{2,2}_8}$ & $0.372 $ & $-0.715-i0.399$ & $- 0.113+i0.203$ & $-0.195 - i0.316$ & $-1.4875$ & $0.3304$\\
\hline
\end{tabular}
}
\caption{\raggedright States $\ket{\psi^{2,2}_r}$ that maximally violate the inequalities $I^{2,2}_r$.  We also show the achieved violation $\text{min}(I^{2,2}_r)$ of the corresponding inequality and the Non-stabilizerness by Trace Distance, $\mathcal{M}$, of the state.}\label{2Qbits_stab_states}
\end{table*}

\subsection{Bell-like Inequalities}

To further explore the stabilizer set, we revisit the results from Ref.~\cite{howard2012nonlocality}, establishing a connection between non-stabilizerness detection and the violation of the CHSH inequality~\cite{clauser1969proposed}. Traditionally, a Bell inequality violation serves as a Device-Independent (DI) proof $-$ agnostic to details of state preparation and measurement $-$ that the tested state is entangled. Recent work \cite{macedo2025witnessing} has shown that Bell inequalities can also be repurposed as DI tests of non-stabilizerness.

Further insights emerge when moving to a Device-Dependent (DD) setting, assuming the measurements in the Bell test are fixed as Pauli measurements. In this case, a DD version of the CHSH inequality takes the following form \cite{howard2012nonlocality}:
\begin{align} \label{CHSH} \braket{X\otimes X} + \braket{X\otimes Y} + \braket{Y\otimes X}- \braket{Y\otimes Y} \leq 2\text{ }, \end{align}
which was shown to be a valid constraint on the stabilizer polytope. A natural question is then whether this DD version of the CHSH inequality defines a facet of the stabilizer polytope when restricted to the four observables $X\otimes X$, $X\otimes Y$, $Y\otimes X$, $Y\otimes Y$. This projection of the stabilizer polytope is governed by a single class of inequalities, including
\begin{equation}\label{CHSH2}
    \begin{split}
        & \braket{X\otimes X} + \braket{X\otimes Y} \leq 1, \\
        & \braket{Y\otimes X} - \braket{Y\otimes Y} \leq 1,
    \end{split}
\end{equation}

which sums precisely to the CHSH inequality. Thus, while the DI CHSH inequality forms a facet of the local polytope in Bell’s theorem, its DD counterpart used in the characterization of non-stabilizerness does not. Another intriguing result is that even Bell-type inequalities involving only two terms can act as effective non-stabilizerness witnesses. Interestingly, when these so-called DD inequalities are maximized for the general state given in Eq. (\ref{psigeral}), both exhibit a violation of $\sqrt{2}$. For instance, for the first inequality, the optimal parameters for the state achieving maximal violation are given by $\theta_1 \approx -1.57$, $\theta_2 \approx 0.78$, $\theta_3 = 7.48 \times 10^{-9}$, $\phi_1 \approx -0.40$, $\phi_2 \approx -1.19$, and $\phi_3 \approx -0.17$.

For the 3-qubit case, a full facet enumeration is intractable. Nevertheless, we can still gain insight by analyzing lower-dimensional projections. The first such projection we consider is the three-body polytope projection, where we restrict attention to vertices with only full correlator entries given by:
\begin{equation}
    \langle p_1\otimes p_2\otimes p_3\rangle\text{ }, 
\end{equation}
where $p_i \in \{X,Y,Z\}$. This reduces the full-dimensional polytope $P_{2,3}^{STAB}$ to a $27$-dimensional projection and decreases the number of vertices to $918$. With this reduction, we are now able to identify a significantly larger number of facets—unlike in the full polytope case, where only 2 facets were found after $120$ hours of computation using mplrs \cite{avis2001lrs}. In contrast, for the reduced projection, halting after the same $120$ hours, we identified approximately $40,000,000$ facets.

To show the relevance of these inequalities, we tested whether the W-state \cite{dur2000three}:
\begin{equation} \label{eq:W-state}
    \ket{W} = \frac{1}{\sqrt{3}}\left(\ket{001}+\ket{010}+\ket{100}\right)\text{ },
\end{equation}
and the Hoggar state (\ref{eq:hoggar}) would violate such inequalities. 
For the $W$ state, among all the facets obtained, the largest violation observed (equal to $-714.66$) corresponds to the following facet of the projected polytope:
\begin{align}
    (&-141, -103, -153, -359, -141,  -71,  -10,  -72,-146,\nonumber\\
    &-79,  141,   67,  101,  -99,   17,   78, -292,366,
     -157,  -95, 151,\nonumber \\
     & -225,  -45,  -89, -170,  232,
        288)\cdot {\bf P}_3 \geq -608\text{ },
\end{align}
where ${\bf P}_3$ stands for the three-body correlators vector associated with the state under test:
\begin{equation}
    {\bf P}_3 = (\langle X\otimes X\otimes X \rangle, \langle X\otimes  X\otimes Y \rangle,\dots, \langle Z\otimes Z\otimes Z\rangle)\text{ }.
\end{equation}
For the Hoggar state, we found it has the highest violation (of  $-1518.66$) for an  inequality given by
\begin{align}
    (  &278, -348, -500, -330, -116,   42, -204,  406,-60,\nonumber \\
      &  317,   15,  143,  687, -241,  399,  567, -229,303, -27, -419, \nonumber\\
      &   339, -181,  507, -433, -271,  609, 89)\cdot{\bf P}_3 \geq -1236\text{ }.
\end{align}

Similarly to what has been done in the case of 2-qubits, we have also obtained the projected stabilizer polytope considering only two-body expectation values corresponding to the Pauli operators $X$ and $Y$ among all pairs of qubits, that is, correlators of the form: $\langle X \otimes X \otimes I \rangle$, $\langle X \otimes I \otimes X \rangle$, $\ldots$, $\langle I \otimes Y \otimes Y \rangle$. As a result, the $1080$ vertices of the full stabilizer polytope are reduced to only $80$. After transitioning to the H-representation, $9984$ facets were obtained. After evaluating the inequalities for the W-state, we found that this state violates 840 of them, while the Hoggar state defined in Eq. (\ref{eq:hoggar}) violates 122. One such inequality is given by:
\begin{align}\label{chsh3q}
    &2\braket{X_1 X_3} - \braket{X_1 X_2}
     +3\braket{X_2   X_3} -3\braket{X_1 Y_2} \nonumber \\
    & +\braket{Y_1   X_2} - 2\braket{X_1 Y_3} -3 \braket{X_2  Y_3 } + \braket{Y_2 X_3} \nonumber \\
    &+3 \braket{Y_1 Y_2} +2 \braket{Y_1 Y_3} - \braket{Y_2 Y_3}\leq 4 \text{ },
\end{align}
where we have used the short-hand notation  $\braket{I \otimes X \otimes X}=\braket{X_2 X_3}$ and similarly for the other terms. Inequality (\ref{chsh3q}) is violated by the Hoggar state with a value of $\approx 4.66$ and by the W-state at $\approx 5.33$.

Inspired by the results in \cite{howard2012nonlocality}, showing that the CHSH inequality is a valid stabilizer witness if one assumes the measurements to be Pauli operators, one can wonder whether other paradigmatic Bell inequalities will show a similar behaviour. Unfortunately, as we discuss next, this is not the case. Take, for instance, the multipartite Mermin inequality \cite{mermin1990extreme}, which for 3 qubits and Pauli measurements can be defined as:
\begin{equation}
\braket{XXX}-\braket{XYY}-\braket{YXY}-\braket{YYX} \leq 2\text{ }.
\end{equation}
This inequality achieves the maximum algebraic violation for a GHZ state $\ket{\text{GHZ}}=(1/\sqrt{2})(\ket{000}+\ket{111})$ that is a stabilizer state, hence the Mermin inequality cannot be a valid witness (a similar behavior being valid for any number of qubits).

\section{Statistical Distribution and Noise Resilience}\label{sec:distribution}

Complementing the previous geometric analysis, this section presents numerical results on the distribution of non-stabilizerness and its robustness against noise.

\subsection{Non-stabilizer distribution}

It is known that generic quantum states are highly entangled \cite{hayden2006aspects}, a phenomenon known as concentration of measure, where if we sort a quantum state at random, with high probability it will be almost maximally entangled with respect to any bipartition of its systems. Interestingly, similar results also hold for non-stabilizerness \cite{leone2022stabilizer, liu2022many,chen2024magic,turkeshi2025magic}. While these results hold asymptotically, much less is known about the distribution of non-stabilizerness for a small number of qubits $-$ a question we address next by randomly sampling quantum states of $1$, $2$, and $3$ qubits.

We have employed Qiskit's \cite{qiskit2024} \href{https://docs.quantum.ibm.com/api/qiskit/0.19/qiskit.quantum_info.random_statevector}{\texttt{random\_statevector}} function to sample random pure states. This function samples states from a uniform distribution based on Haar measure. For mixed states, we utilized Qiskit's \href{https://docs.quantum.ibm.com/api/qiskit/0.35/qiskit.quantum_info.random_density_matrix}{\texttt{random\_density\_matrix}} function, which is based on the Hilbert-Schmidt measure \cite{zyczkowski2001induced}.
For each of the cases, we sampled $360,000$ random pure and mixed states, which, as we will see, already provide a reasonable sample to understand the behavior of the non-stabilizerness distribution qualitatively.

The results for all cases analyzed can be seen in Fig.~\ref{fig:mtd_hist_all}. For the mixed single-qubit case, we have a peak at zero non-stabilizerness. The highest value of non-stabilizerness was $\mathcal{M}(\rho) = 0.211$ for the state $\ket{T}$ with Bloch vector $(r_x,r_y,r_z) =(1,1,1)/\sqrt{3}$ \cite{bravyi2005universal}. The highest value of non-stabilizerness for 2 qubits is given by $\mathcal{M}(\rho) = 0.378$, corresponding to the state $\ket{T}^{\otimes 2}$, that is, to a separable two-qubit state. In turn, for 3 qubits, the highest value of non-stabilizerness obtained was $\mathcal{M}(\rho) = 0.583$, corresponding to the Hoggar state given by \cite{howard2017application, hoggar199864}:
\begin{align}
    \label{eq:hoggar}
    \ket{\text{Hoggar}} = \frac{1}{\sqrt{6}}\Big[(1+i)\ket{000} - &\ket{010} + \ket{011} \nonumber\\
    -i &\ket{100} + \ket{101}\Big]\text{ },
\end{align}
That is different from the 2-qubit case, which is an entangled state. For comparison, the NTD for $\ket{T}^{\otimes 3}$ is given by $\mathcal{M(\rho)}=0.509$.

This simple analysis already allows us to qualitatively understand how the distribution of non-stabilizerness, as quantified by the NTD, evolves with an increasing number of qubits. The distribution exhibits a gap between the mixed and pure cases. That is, as expected, the distribution of non-stabilizerness for mixed states tends to concentrate around smaller values.  Moreover, as the number of qubits increases, the peak of the distribution shifts toward higher non-stabilizerness values while its width decreases, reflecting a scaling behavior consistent with the asymptotic concentration described in the literature \cite{liu2022many}.


We emphasize that our random state sampling identified the $\ket{T}$, $\ket{T}^{\otimes 2}$, and $\ket{\text{Hoggar}}$ states as yielding the highest NTD values, which does not imply they are the states with the highest NTD. However, while this numerical search is not exhaustive, it provides strong candidates for states with maximal NTD. A comparison with states known to maximize the Robustness of Magic~\cite{howard2017application} reveals that for one and three qubits, the states with the highest RoM also exhibit the highest NTD found in our search. In contrast, for the two-qubit system, the state with the highest RoM has an NTD of $0.330$, which is lower than that of the $\ket{T}^{\otimes 2}$ state.

\subsection{Comparison of Measures}

To validate the NTD as a reliable resource quantifier, we benchmark it against the RoM and SRE using the sampled datasets.

\begin{figure}[ht!]
    \centering
    \includegraphics[width=\columnwidth]{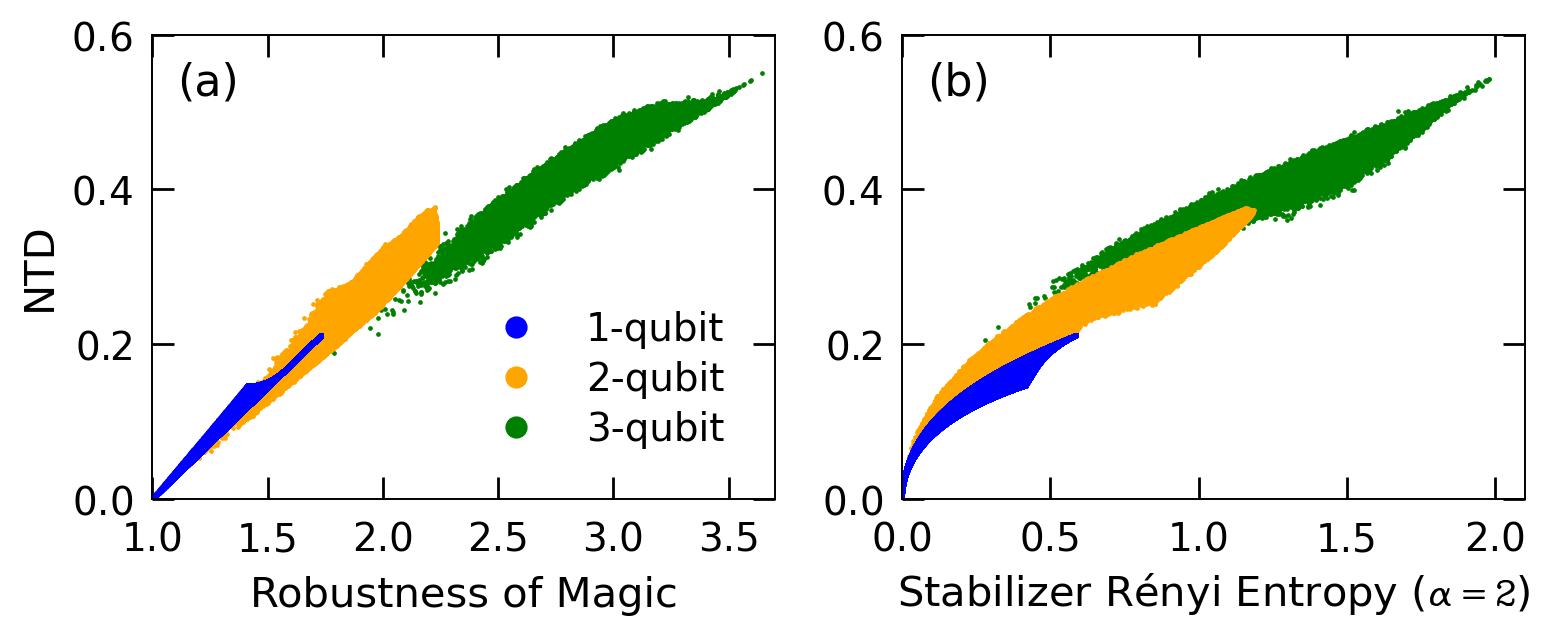}
\caption{\raggedright Non-stabilizerness by Trace Distance versus \textbf{(a)} Robustness of Magic and \textbf{(b)} Stabilizer Rényi Entropy ($\alpha=2$) for randomly sampled 1-, 2-, and 3-qubit pure states. A total of 360,000 pure states were sampled for each qubit.}

    \label{fig:MTD_RoM_Renyi_Pure.jpg}
\end{figure}
Using the same pure-state dataset as above, we also compared the NTD with two other non-stabilizerness measures: RoM and the stabilizer Rényi Entropy. As shown in Fig. \ref{fig:MTD_RoM_Renyi_Pure.jpg}, the behavior of the NTD in relation to RoM and the stabilizer Rényi Entropy is very similar. There is no one-to-one correspondence between these measures, as each NTD value corresponds to a continuous but relatively narrow range of values for the other two measures. Nevertheless, as expected, states classified as highly (barely) non-stabilizers according to one measure tend to also exhibit high(low) non-stabilizerness under the other measures.

\begin{figure}[ht!]
    \centering
    \includegraphics[width=\columnwidth]{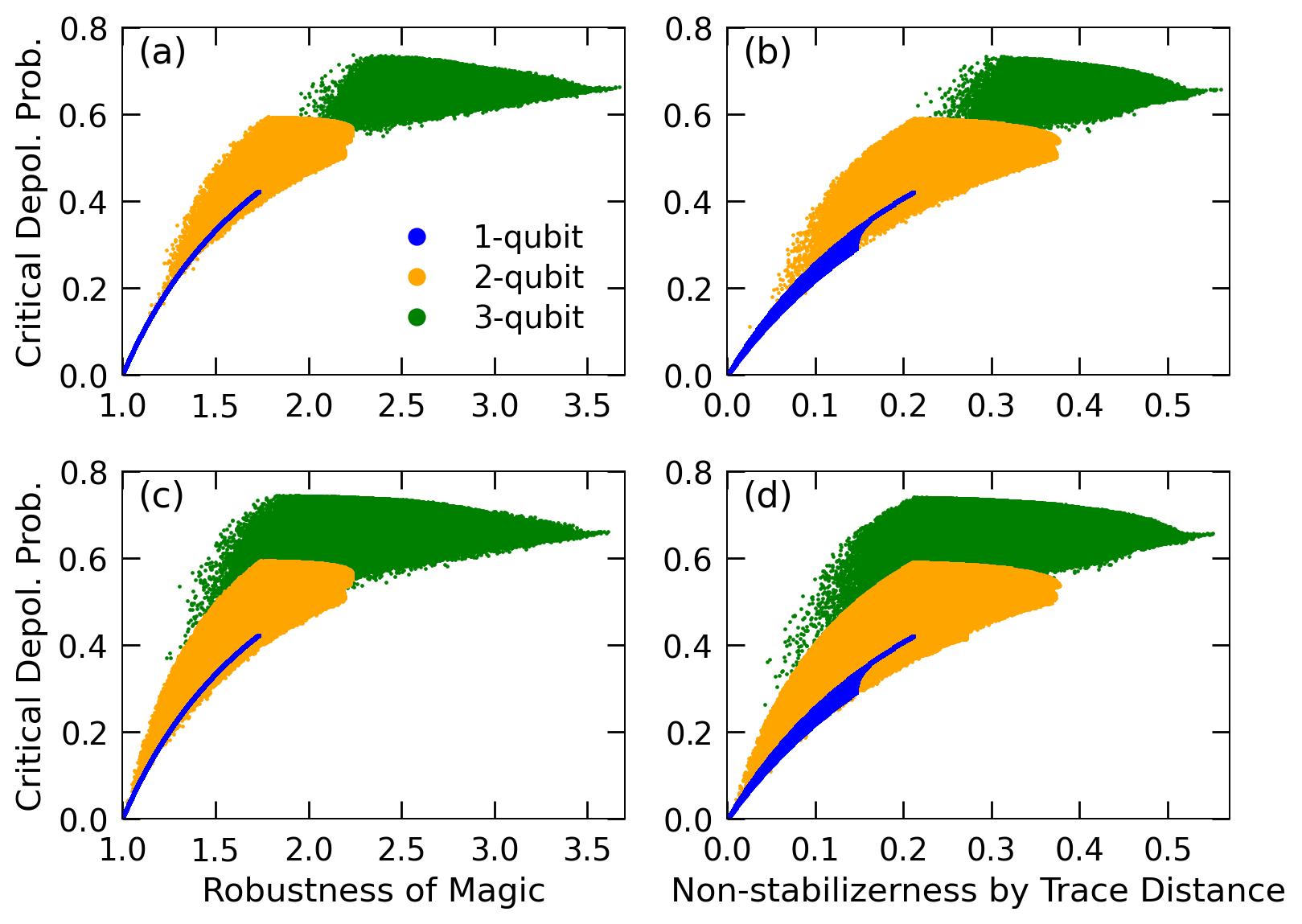}
  \caption{\raggedright
    Critical depolarization probability for pure states versus: \textbf{(a)} RoM and \textbf{(b)} NTD, using uniformly random quantum states;   \textbf{(c)} RoM and \textbf{(d)} NTD, using biased random quantum states.  
    For each case ($1$, $2$ or $3$ qubits), 1,900,000 states were sampled.  
}
    \label{fig:CriticalDepolProbvsMagic}
\end{figure}

\subsection{Robustness to Noise}\label{subsec:robustness_depolarization}

 To understand the noise resilience of the non-stabilizerness of a quantum state, we investigate the depolarization probability at which a pure state loses its non-stabilizerness, i.e., it enters the stabilizer polytope. This analysis is motivated by the operational relevance of the NTD: any state within the stabilizer polytope is efficiently simulable using the Aaronson-Gottesman algorithm \cite{aaronson2004improved}.

More precisely, we consider a global depolarization scenario where, given a randomly sampled state vector $\psi$  with density matrix $\rho=\lvert \psi \rangle \langle \psi \rvert$, we consider its depolarized version $(1-p)\rho+p\openone/d$ (where $d$ is the Hilbert space dimension of the state under scrutiny). We also consider a local depolarization scenario, where the depolarization noise is applied to each qubit individually, assuming that the depolarization parameter $p$ is the same for each noise channel.

We analyze the relationship between the RoM and depolarization, as well as the relationship between NTD and depolarization (see Fig. \ref{fig:CriticalDepolProbvsMagic}).

For the 1-qubit case, we observe that the threshold depolarization probability $p$ above which all states enter the stabilizer polytope is given by $p\approx 0.42$. For $2$ and $3$ qubits, this value increases to $p\approx 0.59$ and $p\approx 0.74$, respectively. The thresholds obtained using RoM and NTD are numerically very close, indicating, as expected, consistency between different measures to quantify the threshold depolarization upon which the set of quantum states becomes classically simulable.

As evident from the histograms in Fig. \ref{fig:mtd_hist_all}, the concentration of non-stabilizerness in quantum states sampled randomly according to the Haar measure increases with the number of qubits. Consequently, the probability of sampling states with low non-stabilizerness decreases significantly. To address this and obtain a comprehensive description of the critical depolarization probability across all non-stabilizer states, we employed a biased random generator, which was derived by slightly modifying the uniform random generator proposed in \cite{PhysRevA.80.042309}. In that work, the authors propose circuits to generate random quantum states using only rotation gates ($R_Y(\theta)$ and $R_Z(\theta)$) and CNOT gates. Circuits for two-qubit states use 6 independent real parameters (angles for the rotation gates), while circuits for three-qubit states use 14 independent real parameters. Instead of selecting the angles using the distribution provided in that work, we chose them from a uniform distribution. This simple adjustment introduces bias into the random generator, enabling the sampling of quantum states closer to the boundary of the non-stabilizer set, as illustrated in Figs. \ref{fig:CriticalDepolProbvsMagic}\textbf{(c)} and \ref{fig:CriticalDepolProbvsMagic}\textbf{(d)}.  In short, biased sampling is employed as a practical tool to probe rare, highly robust (or non-robust) states that would otherwise be missed.

This approach has a particularly notable impact on the three-qubit case, where the observed NTD range shifts from $[0.186, 0.560]$ to $[0.043, 0.551]$, corresponding to a change in the Critical Depolarization Probability range from $[0.493, 0.733]$ to $[0.241, 0.741]$. As expected, however, the overall effect on the maximum critical depolarization probability remains small due to the increasing concentration of non-stabilizerness at higher qubit counts.

The results for local depolarization in each qubit are shown in Fig.~\ref{fig:LocalCriticalDepolarization}. The threshold depolarization probability \( p \), above which all states lose their non-stabilizerness in the 2- and 3-qubit cases, decreases to approximately \( p \approx 0.49 \) and \( p \approx 0.50 \), respectively.

Interestingly, significant levels of depolarization are required to push generic random states into the stabilizer polytope. This is particularly interesting in the case of local depolarization, which is known to rapidly degrade entanglement and drive states toward separability \cite{aolita2008scaling}. The fact that a high noise threshold is still required to eliminate non-stabilizerness underscores its robustness and supports the interpretation of the NTD as a meaningful resource quantifier.

\begin{figure}[ht!]
    \centering
    \includegraphics[width=\columnwidth]{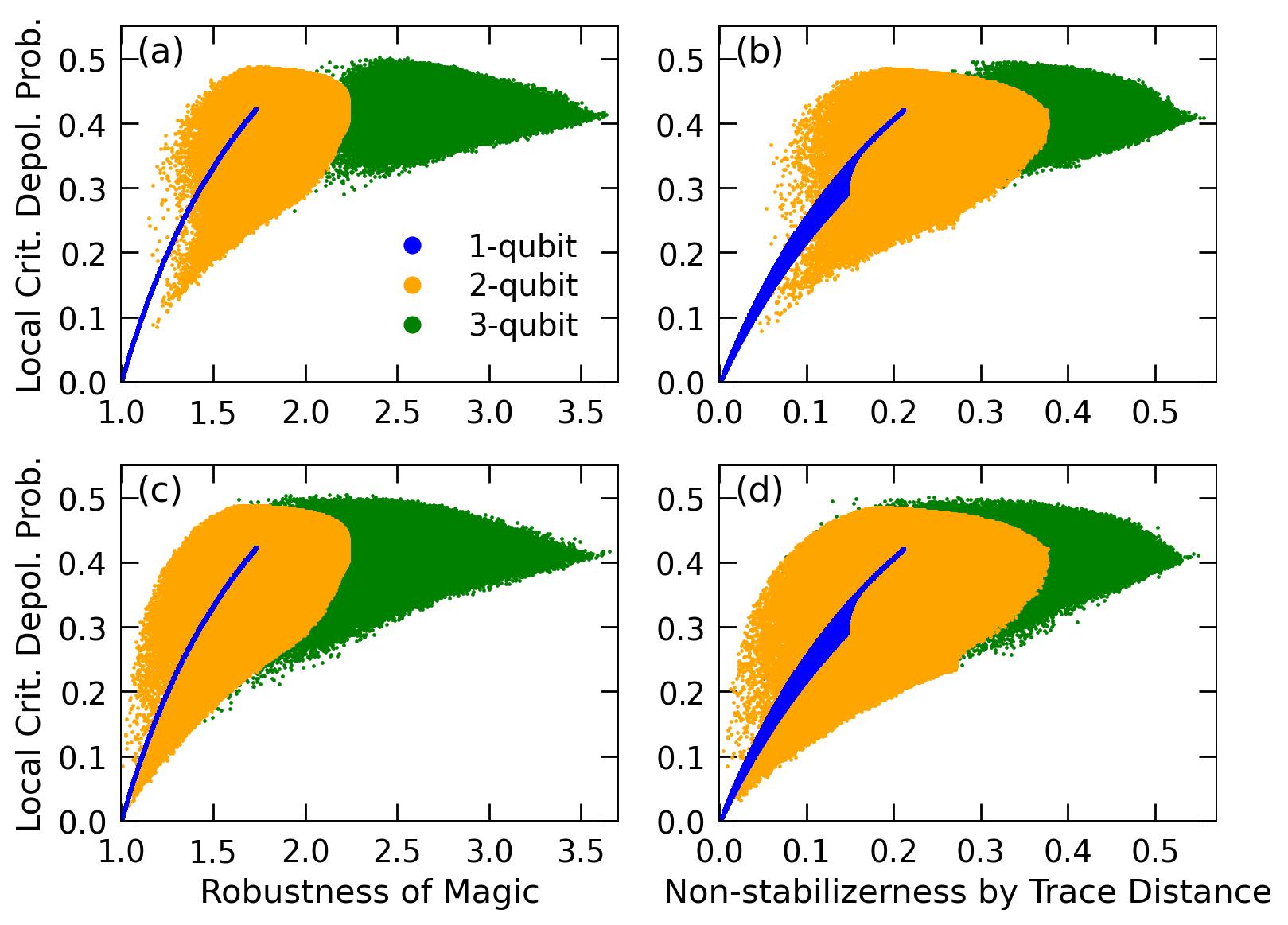}
    \caption{\raggedright
    Local critical depolarization probability for pure states versus: \textbf{(a)} RoM and \textbf{(b)} NTD, using uniformly random quantum states;   \textbf{(c)} RoM and \textbf{(d)} NTD, using biased random quantum states.  
    For each case ($1$, $2$ or $3$ qubits), 1,900,000 states were sampled.  
}

   \label{fig:LocalCriticalDepolarization}
\end{figure}

\section{Non-stabilizerness and entanglement \label{sec:magicent}}

\subsection{Non-stabilizerness versus entanglement in randomly sampled pure states}
To gain further insight into the relationship between entanglement and non-stabilizerness beyond the asymptotic results mentioned above, we have sampled pure quantum states of 2 and 3-qubits using distinct methods. 

For each state, we compute the NTD and its entanglement entropy \( E(\rho) \). For 2-qubit states, this is the von Neumann entropy of the reduced density matrix according to a given bipartition of the state, that is:  
\begin{equation}  
    E(\rho)= S(\rho_A) = -\text{Tr}\left[\rho_A \log(\rho_A)\right]\text{,}  
\end{equation}  
where \(\rho_A\) is the reduced density matrix of a given bipartition \(A\vert B\) of the state \(\rho\). For the 3-qubit case, we employ the mean entropy, which measures the amount of entanglement that the remaining system retains on average after the other two are traced out:

\begin{equation}
    \overline{\mathcal{E}}(\rho_{ABC}) = \dfrac{1}{3}\left[ S(\rho_{A}) + S(\rho_{B})+ S(\rho_{C})\right]\text{ },
\end{equation}
where now \(\rho_A\) is the reduced density matrix of a given bipartition \(A\vert BC\) of the state \(\rho\) and similarly for the other bipartitions.

For the $2$-qubit case, we utilize the same datasets generated for the analysis in Sec. \ref{sec:distribution}, which include states sampled uniformly via the Haar measure and those from the biased generator (see Sec. \ref{subsec:robustness_depolarization}).

For the 3-qubit case, we use three methods: the two above, along with an additional approach that samples states via Hilbert-space walks from stabilizer polytope vertices to the Hoggar state (Eq.~\eqref{eq:hoggar}). Consider the states 
\begin{equation}
\ket{\psi} = \begin{pmatrix}
\psi_0 \\
\vdots \\
\psi_7
\end{pmatrix}
\ket{\varphi} = \begin{pmatrix}
\varphi_0 \\
\vdots \\
\varphi_7
\end{pmatrix}\text{ }.
\label{eq:statevectors}
\end{equation}
Then a walk  from state $\ket{\psi}$ to state $\ket{\varphi}$
is carried out by varying $\epsilon \in [0,1]$ in the expression:

\begin{equation}  
|\xi \rangle = \frac{1}{\sqrt{\sum\limits_{i=0}^{7} \left| \epsilon \psi_i + (1 - \epsilon) \varphi_i \right|^2}} 
\begin{pmatrix}
\epsilon \psi_0 + (1 - \epsilon) \varphi_0 \\
\vdots \\
\epsilon \psi_7 + (1 - \epsilon) \varphi_7 \\
\end{pmatrix}\text{ }.
\label{eq:walk}
\end{equation}
 For the walk, we used a step of $\Delta\epsilon = 0.001$, resulting in 1001 states per walk. Since the number of vertices is 1080, we obtain a total of 1,081,080  states. 

The results for $2$ and $3$-qubits can be seen in Figs.  \ref{fig:entanglementEntropyvsMTD} and \ref{fig:MTDvsMeanEntropy_Pure-3qubit}, respectively. As evident from the plots, certain regions in the trade-off between non-stabilizerness and entanglement are forbidden. Specifically, despite asymptotic results \cite{hayden2006aspects,liu2022many, turkeshi2023pauli} showing that for a large number of qubits, random states exhibit near-maximal non-stabilizerness and entanglement, the few-qubit regime reveals a stark contrast: states with maximal non-stabilizerness cannot also be maximally entangled, and vice versa. This is reflected in the empty upper-right region of the plots.

Another intriguing feature appears in the lower region, where states with intermediate levels of both entanglement and non-stabilizerness seem to be prohibited.  Along the horizontal axis (entanglement entropy), we have special points that are those corresponding to stabilizer states. In the case of $2$-qubits, these correspond to $E=0$ (separable state) and $E=1$ (Bell state), and in the case of $3$-qubits, they correspond to $E=0$ (separable state), $E=2/3$ (biseparable states), and $E=1$ (GHZ state). As we see, near the vertices of the stabilizer polytope, non-stabilizerness is strongly constrained $-$ a result that we will formally establish in the following section using Fannes’ inequality.

\begin{figure}[ht!]
    \centering
    \begin{subfigure}{0.8\columnwidth}
        \centering
        \includegraphics[width=\columnwidth]{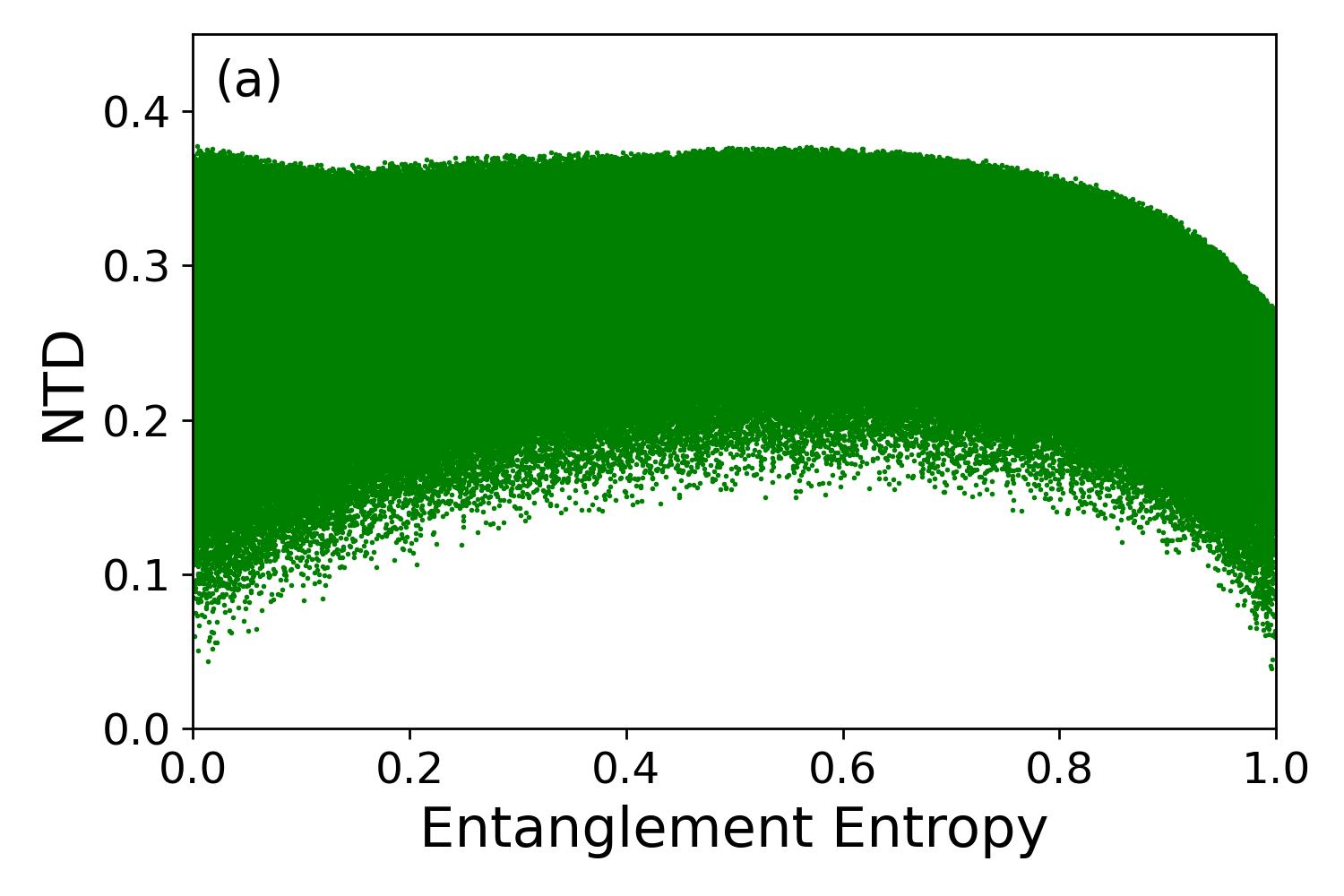}
    \end{subfigure}
   \\ 
   \centering
    \begin{subfigure}{0.8\columnwidth}       
        \includegraphics[width=\columnwidth]{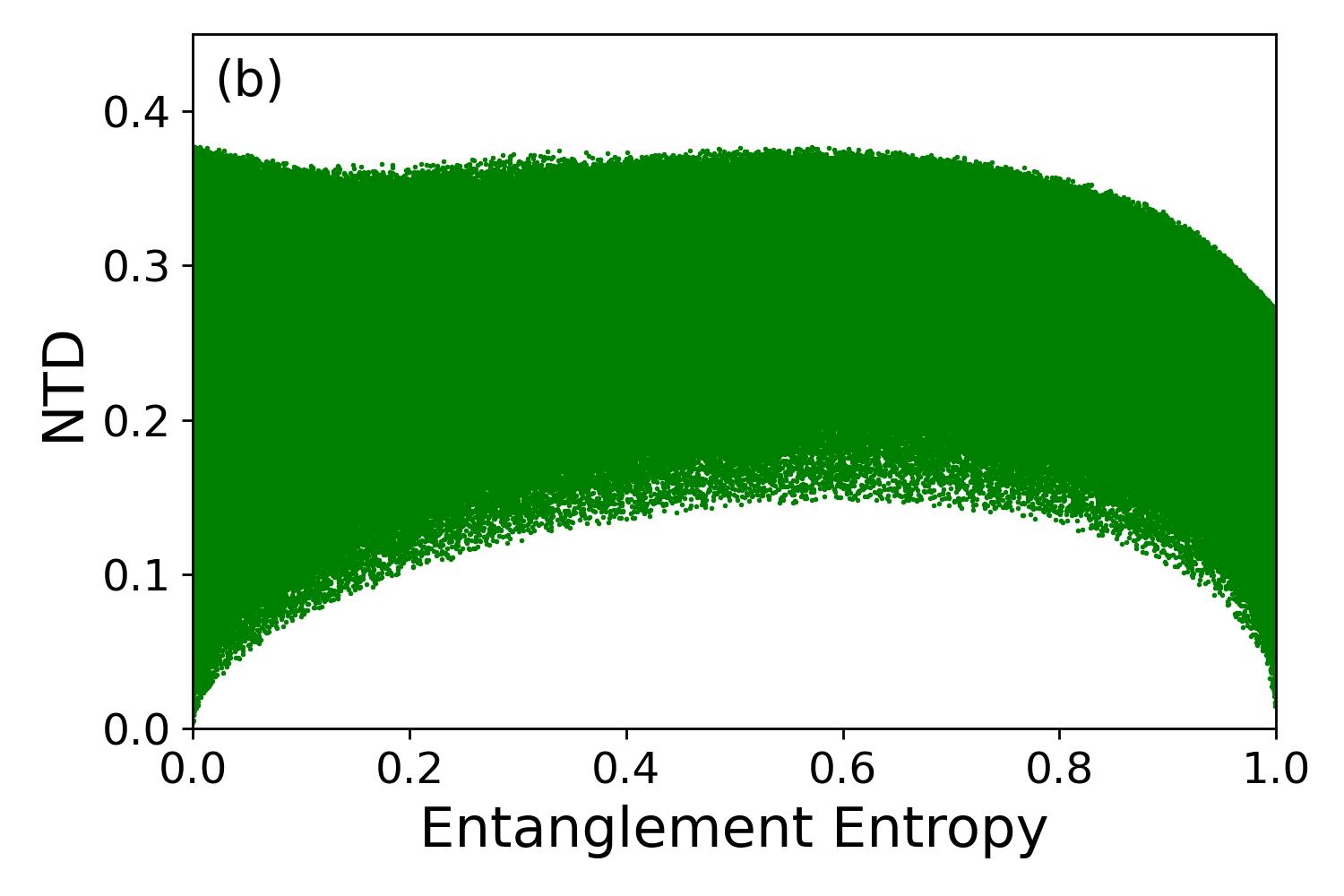}
    \end{subfigure}
\caption{\raggedright Entanglement Entropy vs. Non-stabilizerness by Trace Distance for 2-qubit pure states, using \textbf{(a)} a uniform random generator and \textbf{(b)} a biased random generator (described in Section~\ref{subsec:robustness_depolarization}). For each plot, 2,000,000 states were generated.}
\label{fig:entanglementEntropyvsMTD}
\end{figure}

\begin{figure}[ht!]
    \centering
    \includegraphics[width=\columnwidth]{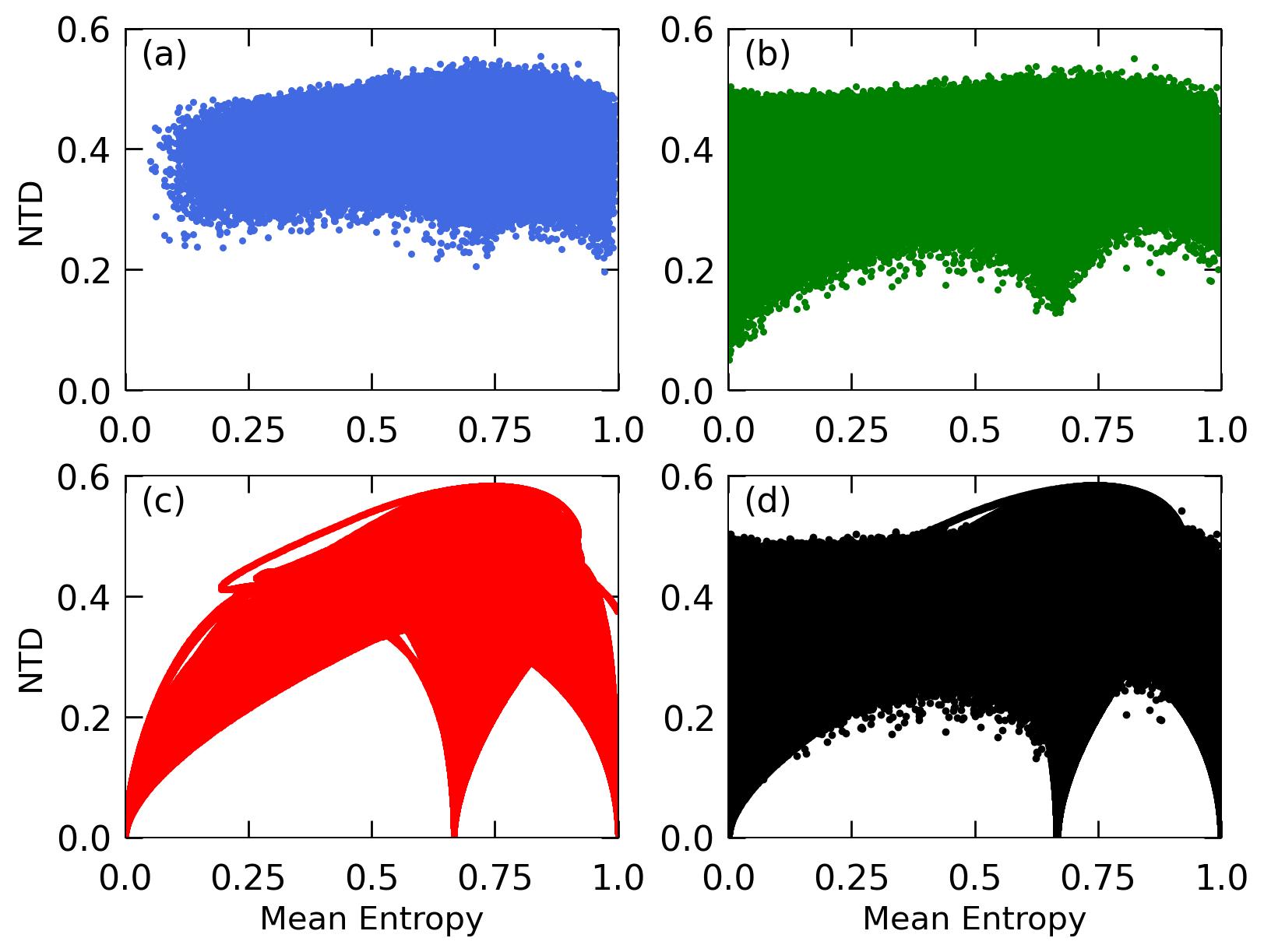}
   \caption{\raggedright NTD vs. Mean Entropy for 3-qubit pure states obtained using 
\textbf{(a)} a uniform random state generator (2,200,000 states), 
\textbf{(b)} a biased random state generator (1,200,000 states), 
\textbf{(c)} walks on the hypersphere from each polytope vertex to a Hoggar state (1,081,080 states), and 
\textbf{(d)} all states combined, showing the most complete relationship between NTD and Mean Entropy.}

    \label{fig:MTDvsMeanEntropy_Pure-3qubit}
\end{figure}

\subsection{Concentration properties of low non-stabilizer states}\label{sec:concentration_properties}

Building on the results discussed above, we will demonstrate that pure states in proximity to the stabilizer polytope inherit its structural properties. To establish this, we first prove a geometric result about polytopes, leveraging the continuity of entropy $-$ first shown by Fannes \cite{fannes1973continuity}, and then enhanced by Audenaert \cite{audenaert2007sharp}:

\begin{lemma}[Fannes-Audenaert inequality]
    Let $\rho, \sigma \in \mathcal D(\mathcal H_{d,n})$ with $ T(\rho, \sigma)=\frac 1 2 \|\rho-\sigma\| = t$. Then:
    \begin{equation}
        |S(\rho)-S(\sigma)| \leq t\log(d^n-1) + H[\{t,1-t\}]\;,\label{eq: Fanneserror}
    \end{equation}
    where
    \begin{equation}
        H[\{t,1-t\}] = -t\log t -(1-t) \log(1-t)\text{ },
    \end{equation}
    is the Shannon entropy of the distribution $\{t,1-t\}$. We will refer to the RHS of Eq.~(\ref{eq: Fanneserror}) as the Fannes error $f(t,d^n)$.
\end{lemma}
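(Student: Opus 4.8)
\emph{Proof proposal.} The plan is to reduce the quantum statement to a classical inequality for probability vectors and then prove the classical version by an optimal–coupling argument. Since the von Neumann entropy depends only on the spectrum, let $r=(r_1\ge r_2\ge\cdots)$ and $s=(s_1\ge s_2\ge\cdots)$ be the eigenvalues of $\rho$ and $\sigma$ in non-increasing order, so that $S(\rho)=H(r)$ and $S(\sigma)=H(s)$ with $H(\cdot)$ the Shannon entropy. The first step is Mirsky's inequality (equivalently, Lidskii majorization): for Hermitian operators the trace norm dominates the $\ell_1$-distance of the ordered spectra, $\sum_i|r_i-s_i|\le\|\rho-\sigma\|$. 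Hence the classical total-variation distance $t':=\tfrac12\sum_i|r_i-s_i|$ satisfies $t'\le t$, and it suffices to establish $|H(r)-H(s)|\le t'\log(d^n-1)+H[\{t',1-t'\}]$ and then transfer to $t$ using that the Fannes error $f(\cdot,d^n)$ is non-decreasing on $[0,1-d^{-n}]$ (with the trivial bound $|H(r)-H(s)|\le\log d^n=f(1-d^{-n},d^n)$ taking over in the complementary regime, which in any case lies far outside the small-$t$ region relevant to states near the polytope).

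For the classical bound I would use the coupling characterization of total-variation distance: there exist jointly distributed $X\sim r$ and $Y\sim s$ on $\{1,\dots,d^n\}$ with $\Pr[X\ne Y]=t'$. Assume without loss of generality $H(r)\ge H(s)$. The chain rule gives $H(X)\le H(X,Y)=H(Y)+H(X\mid Y)$, hence $H(r)-H(s)\le H(X\mid Y)$. Introduce $E=\mathbf 1[X\ne Y]$, a deterministic function of $(X,Y)$, and estimate
\begin{equation*}
H(X\mid Y)\;\le\;H(E\mid Y)+H(X\mid E,Y)\;\le\;H(E)+H(X\mid E,Y).
\end{equation*}
Now $H(E)=H[\{t',1-t'\}]$; on the event $E=0$ one has $X=Y$, so that conditional entropy vanishes; on the event $E=1$, given $Y=y$ the variable $X$ ranges over at most $d^n-1$ values, so $H(X\mid E=1,Y)\le\log(d^n-1)$. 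Weighting by $\Pr[E=1]=t'$ yields $H(X\mid E,Y)\le t'\log(d^n-1)$, whence $H(r)-H(s)\le t'\log(d^n-1)+H[\{t',1-t'\}]$. Exchanging the roles of $r$ and $s$ removes the absolute value, and combining with $t'\le t$ and the monotonicity noted above completes the argument.

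The routine parts are the chain-rule manipulations and checking that $f(t,d^n)$ is non-decreasing up to its maximum (its derivative equals $\log\!\big[(d^n-1)(1-t)/t\big]$, which is non-negative exactly for $t\le 1-d^{-n}$, where $f=\log d^n$). The step I expect to require the most care is the sharpness — Audenaert's improvement over the original Fannes estimate: one should verify that equality is attained by taking $\sigma$ pure and $\rho=(1-t)\sigma+\tfrac{t}{d^n-1}(I-\sigma)$, which pins down the coefficient $\log(d^n-1)$ together with the additive term $H[\{t,1-t\}]$ and shows the bound cannot be weakened. It is precisely this tightness that forces the coupling-plus-conditioning estimate rather than a cruder bound such as $|S(\rho)-S(\sigma)|\le\log d^n$, and it is what distinguishes the Fannes–Audenaert form used here from the weaker original Fannes continuity bound.
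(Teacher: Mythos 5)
The paper itself does not prove this lemma; it imports it from Fannes and Audenaert as a known continuity estimate, so your argument is necessarily a self-contained alternative rather than a parallel to anything in the text. The route you take is the standard modern one and it is sound where it applies: Mirsky's inequality correctly reduces the quantum statement to a classical one for the ordered spectra with $t'\le t$, and the maximal-coupling/Fano-type conditioning on the error indicator $E$ correctly yields $|H(r)-H(s)|\le t'\log(d^n-1)+H[\{t',1-t'\}]$ — each step ($H(X\mid Y)=H(E\mid Y)+H(X\mid E,Y)$ since $E$ is a function of $(X,Y)$, the vanishing of the $E=0$ contribution, and $H(X\mid E=1,Y)\le\log(d^n-1)$) checks out. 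Combined with the monotonicity of $f(\cdot,d^n)$ on $[0,1-d^{-n}]$, this proves the lemma for all $t\le 1-d^{-n}$, which is precisely the regime the paper actually uses: the concentration theorem restricts to $\varepsilon\le\varepsilon^*\le 1-1/d^n$, where the paper itself notes $f$ is increasing.

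The one place your argument does not close is the complementary regime $t>1-d^{-n}$. There $f(t,d^n)<f(1-d^{-n},d^n)=\log d^n$, so the ``trivial bound $\log d^n$'' you invoke is strictly weaker than the claimed inequality, and the transfer from $t'$ to $t$ via monotonicity also fails (one would need $f(t',d^n)\le f(t,d^n)$ with $t'\le t$, which is false past the maximum). Establishing the stated bound for those $t$ requires a separate argument, as in Audenaert's original treatment. Since the lemma as stated carries no restriction on $t$, this is a genuine, if peripheral, gap — harmless for everything the paper derives from the lemma, but worth flagging explicitly rather than dismissing. Finally, the tightness discussion in your last paragraph, while correct (the state $(1-t)\sigma+\tfrac{t}{d^n-1}(I-\sigma)$ with $\sigma$ pure does saturate the bound), is not needed: the lemma asserts only the upper bound, not its sharpness.
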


\noindent Note that $f(t\to 0, d^n) 
 =0$, and is monotonically increasing for $ t \leq 1-1/d^n$, which can be checked by looking at $\partial_t f(t, d^n) \geq 0$. In this regime, $f(t, d^n) \leq n \log d$.

 Now, consider some polytope of states $P$ whose vertices are pure, as is the case of $P^\mathrm{STAB}_{d,n}$. If a state $\rho$ is known to be close to $P$ in trace distance, then it must also be relatively close to at least one of its vertices, since it must be near a facet. This idea is formalized in the following theorem which is proven in Appendix \ref{apd:concentration}:

\begin{theorem}\label{thm:concentration}
    Let $P$ be a polytope of states built of the convex hull of a set of states:
    \begin{equation}
        P = \mathrm{conv}\{|\psi_1\rangle \langle  \psi_1|, |\psi_2\rangle \langle \psi_2|, \cdots, |\psi_m\rangle \langle \psi_m |\} \subseteq \mathcal D(\mathcal H_{d,n})\text{ }.
    \end{equation}
    Then, there exists  $\varepsilon^* >0$, such that for 
    any pure state $\Psi = \ket{\psi}\bra{\psi}\in \mathcal D(\mathcal H_{d,n})$ with $d_P(\Psi) = \min_{\sigma \in P} T(\rho, \sigma) = \varepsilon \leq \varepsilon^*$, there is a vertex $|\psi_j\rangle \langle \psi_j|$ such that:
    \begin{equation}
        T(\Psi, |\psi_j\rangle \langle \psi_j|) \leq \varepsilon + 2\sqrt{1-h^{-1}[f(\varepsilon,  d^n)]}\;,
    \end{equation}
    with the inverse of the binary entropy function taken in the interval $\mathrm{im} (h^{-1}) = [\frac 1 2 ,1]$.
\label{thm:vertexpoly}
\end{theorem}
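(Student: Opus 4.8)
The plan is to turn the hypothesis ``$\rho$ is $\varepsilon$-close to the polytope'' into the statement ``$\rho$ is close to a \emph{facet}, hence to a face, hence to a convex combination of a small number of vertices, hence — using purity and entropy — close to a single vertex.'' Concretely, let $\sigma^\star \in P$ realize the minimum, $T(\Psi,\sigma^\star)=\varepsilon$. Since $\sigma^\star$ is the closest point of a polytope, it lies on some face $F$ of $P$; write $\sigma^\star=\sum_{k} q_k |\psi_{j_k}\rangle\langle\psi_{j_k}|$ as a convex combination of the vertices of that face, with $q_k>0$. The first real step is to argue that for $\varepsilon$ small enough one of the weights $q_k$ must be close to $1$. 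The mechanism is entropy: $S(\Psi)=0$ because $\Psi$ is pure, while $S(\sigma^\star)$ is large unless $\sigma^\star$ is itself nearly pure. By the Fannes--Audenaert inequality (the Lemma just stated), $S(\sigma^\star)=|S(\Psi)-S(\sigma^\star)|\le f(\varepsilon,d^n)$. So $\sigma^\star$ has small entropy, which forces its largest eigenvalue $\lambda_{\max}$ to satisfy $h(\lambda_{\max})\le S(\sigma^\star)\le f(\varepsilon,d^n)$ — wait, more carefully, $S(\sigma^\star)\ge h(\lambda_{\max})$ is false in general; the correct bound is $S(\sigma^\star)\ge -\lambda_{\max}\log\lambda_{\max}-(1-\lambda_{\max})\log(1-\lambda_{\max}) = h(\lambda_{\max})$ only when $\sigma^\star$ has rank $2$, but in general $S(\sigma^\star)\ge h(\lambda_{\max})$ does hold by grouping all non-maximal eigenvalues (concavity of entropy under coarse-graining gives $S(\sigma^\star)\ge H(\{\lambda_{\max},1-\lambda_{\max}\})=h(\lambda_{\max})$). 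Hence $h(\lambda_{\max})\le f(\varepsilon,d^n)$, so $\lambda_{\max}\ge h^{-1}[f(\varepsilon,d^n)]$ with the branch of $h^{-1}$ in $[\tfrac12,1]$ — exactly the quantity appearing in the theorem.

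Next I would pass from ``$\sigma^\star$ has a large eigenvalue'' to ``$\sigma^\star$ is close to a pure state, and that pure state is close to a vertex.'' Let $|\phi\rangle$ be the eigenvector of $\sigma^\star$ for $\lambda_{\max}$. A standard estimate gives $T(\sigma^\star,|\phi\rangle\langle\phi|)=\tfrac12\|\sigma^\star-|\phi\rangle\langle\phi|\|_1\le \sqrt{1-\lambda_{\max}}$ (bound the trace norm of the difference, whose eigenvalues are $\lambda_{\max}-1<0$ and the remaining $\lambda_i>0$; their absolute values sum to $2(1-\lambda_{\max})$, giving $T\le 1-\lambda_{\max}$, and one can sharpen or simply use $1-\lambda_{\max}\le\sqrt{1-\lambda_{\max}}$). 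Then, because $\sigma^\star$ lies on the face $F$ whose vertices are among the $|\psi_{j_k}\rangle\langle\psi_{j_k}|$, and $\sigma^\star$ is within $\sqrt{1-\lambda_{\max}}$ of the pure state $|\phi\rangle\langle\phi|$, a pure state that is $\delta$-close to a convex combination of pure states must be $O(\delta)$-close to one of them — this is the ``a pure state near the convex hull is near an extreme point'' lemma, and here it is cheap because $|\phi\rangle\langle\phi|$ being nearly equal to $\sigma^\star=\sum q_k|\psi_{j_k}\rangle\langle\psi_{j_k}|$ forces (by taking the $\langle\phi|\cdot|\phi\rangle$ matrix element and using $\sum q_k=1$) that some $|\langle\phi|\psi_{j_k}\rangle|^2$ is close to $1$, i.e. $T(|\phi\rangle\langle\phi|,|\psi_{j_k}\rangle\langle\psi_{j_k}|)=\sqrt{1-|\langle\phi|\psi_{j_k}\rangle|^2}$ is small. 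Chaining $|\psi_{j_k}\rangle$, $|\phi\rangle$, $\sigma^\star$, $\Psi$ through the triangle inequality for $T$ yields
\begin{equation}
T(\Psi,|\psi_{j_k}\rangle\langle\psi_{j_k}|)\le \varepsilon + T(\sigma^\star,|\phi\rangle\langle\phi|) + T(|\phi\rangle\langle\phi|,|\psi_{j_k}\rangle\langle\psi_{j_k}|)\le \varepsilon + 2\sqrt{1-\lambda_{\max}}\le \varepsilon + 2\sqrt{1-h^{-1}[f(\varepsilon,d^n)]}\text{ },
\end{equation}
which is the claimed bound.

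The role of $\varepsilon^\star$: the argument that ``one weight is close to $1$, hence exactly one vertex is selected and the face structure is not problematic'' needs $f(\varepsilon,d^n)$ to be small enough that $h^{-1}[f(\varepsilon,d^n)]>\tfrac12$ strictly (so that the $[\tfrac12,1]$-branch is well-defined and gives $\lambda_{\max}>\tfrac12$, ensuring the dominant eigenvector is unique and the ``nearest vertex'' is unambiguous); since $f(\varepsilon,d^n)\to 0$ as $\varepsilon\to 0$ and $f$ is continuous and monotone on $[0,1-d^{-n}]$, such an $\varepsilon^\star>0$ exists. I would define $\varepsilon^\star$ precisely as the largest $\varepsilon$ with $f(\varepsilon,d^n)\le h(3/4)$, say, or simply as any threshold making the final bound nontrivial ($<1$).

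The main obstacle I anticipate is the step translating ``$\sigma^\star$ is nearly pure and lies on a face of $P$'' into ``$\sigma^\star$ is near a \emph{vertex} of that face.'' The subtlety is that a low-rank — even rank-one — point of a polytope of density matrices need not be a vertex \emph{as a convex set} unless one knows the vertices are pure and linearly positioned so that a pure point of the hull is extreme; here we are given the vertices are pure, and the key fact is that a pure state cannot be written as a nontrivial convex combination of distinct pure states unless they are all equal. So really the obstacle is bookkeeping: one must use the near-pureness (not exact) and carry the error $\sqrt{1-\lambda_{\max}}$ through the ``pure state in a polytope'' argument quantitatively. The matrix-element trick above ($\langle\phi|\sigma^\star|\phi\rangle = \sum q_k|\langle\phi|\psi_{j_k}\rangle|^2 \ge \lambda_{\max}$, so the maximal $|\langle\phi|\psi_{j_k}\rangle|^2 \ge \lambda_{\max}$) is what makes this quantitative and is the heart of the proof; everything else is the triangle inequality and Fannes--Audenaert.
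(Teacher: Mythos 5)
Your proposal is correct and follows essentially the same route as the paper's proof in Appendix~\ref{apd:concentration}: Fannes--Audenaert bounds the entropy of the closest point $\sigma^\star$, the lower bound $S(\sigma^\star)\ge h(\lambda_\mathrm{max})$ (the paper's Lemma~\ref{lem:aux_entropy}) forces a dominant eigenvalue $\lambda_\mathrm{max}\ge h^{-1}[f(\varepsilon,d^n)]$, the averaging identity $\lambda_\mathrm{max}=\sum_k q_k|\langle\phi|\psi_{j_k}\rangle|^2$ picks out a vertex with fidelity at least $\lambda_\mathrm{max}$, and the triangle inequality closes the argument. The only (harmless) difference is that you bound $T(\sigma^\star,|\phi\rangle\langle\phi|)$ by the exact value $1-\lambda_\mathrm{max}$ before loosening it to $\sqrt{1-\lambda_\mathrm{max}}$, whereas the paper invokes the Fuchs--van de Graaf inequality directly.
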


\noindent
Hence, for $\varepsilon \to 0$, the distance of $\Psi$ to one of the vertices of the polytope goes to zero. This means that any continuous quantities on states will have their values concentrated to the ones obtained at $P$. Since we are interested in the entanglement entropy, we highlight the following result:
\begin{corollary}
    Let $P$ be a state polytope as written in Theorem \ref{thm:vertexpoly}, and $\Psi = \ket{\psi}\bra{\psi} \in \mathcal D(\mathcal H_{d,n})$ a pure state satisfying the close distance condition, $d_P(\Psi) = \varepsilon \leq \varepsilon^*$. Given a partition $A \subseteq \{1, \cdots,n\}$, it exists a vertex $|\psi_j \rangle \langle \psi_j| $ of $P$ such that, for $\Psi_A = \mathrm{tr}_{\bar A} |\psi\rangle \langle \psi|$ and $\Psi_{j,A} = \mathrm{tr}_{\bar A} |\psi_j\rangle \langle \psi_j|$:
    \begin{equation}
        |S(\Psi_A)-S(\Psi_{j,A})|\leq f(\delta, d^n)\text{ }, \;
    \end{equation}
    where $\delta =\varepsilon + 2\sqrt{1-h^{-1}[f(\varepsilon,  d^n)]}$.
\end{corollary}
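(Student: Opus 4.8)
The plan is to chain together the two results already established just above: Theorem~\ref{thm:vertexpoly} (which controls the distance from a pure state $\Psi$ near the polytope to one of its vertices) and the Fannes--Audenaert inequality (which converts a trace-distance bound into an entropy bound). Concretely, suppose $\Psi = \ket{\psi}\bra{\psi}$ satisfies $d_P(\Psi) = \varepsilon \leq \varepsilon^*$. By Theorem~\ref{thm:vertexpoly} there is a vertex $\ket{\psi_j}\bra{\psi_j}$ of $P$ such that $T(\Psi, \ket{\psi_j}\bra{\psi_j}) \leq \delta$, where $\delta = \varepsilon + 2\sqrt{1 - h^{-1}[f(\varepsilon, d^n)]}$. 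The first step is to observe that the partial trace over $\bar A$ is a CPTP map, and the trace distance is contractive under CPTP maps (this is one of the standard properties of $T$ recalled in Sec.~\ref{subsec:magic}), so $T(\Psi_A, \Psi_{j,A}) \leq T(\Psi, \ket{\psi_j}\bra{\psi_j}) \leq \delta$.

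The second step is to apply the Fannes--Audenaert inequality (the Lemma stated above) to the pair of reduced states $\Psi_A, \Psi_{j,A} \in \mathcal D(\mathcal H_{d, |A|})$. A small technical point: the reduced states live on the Hilbert space of the subsystem $A$, whose dimension is $d^{|A|} \leq d^n$; the bound from Fannes--Audenaert is therefore naturally $f(\delta, d^{|A|})$, but since $f(t, D)$ is monotonically increasing in $D$ (the coefficient $\log(D-1)$ grows with $D$), we have $f(\delta, d^{|A|}) \leq f(\delta, d^n)$, which gives the stated inequality $|S(\Psi_A) - S(\Psi_{j,A})| \leq f(\delta, d^n)$. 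One should also check that $\delta \leq 1 - 1/d^n$ (or at least stays in the regime where $f$ is well-behaved and monotone) for $\varepsilon$ small enough; this is guaranteed by taking $\varepsilon^*$ small, since $\delta \to 0$ as $\varepsilon \to 0$ (both $\varepsilon$ and the square-root term vanish, the latter because $f(\varepsilon, d^n) \to 0$ and $h^{-1}(0) = 1$ on the chosen branch).

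There is essentially no hard analytic obstacle here: the corollary is a direct composition of a contractivity property and a previously stated inequality, and the only thing requiring care is bookkeeping about which dimension enters the Fannes error and making sure the argument $\delta$ falls in the valid range of $f$ and of $h^{-1}$. If one wanted to be slightly more careful, one could note that the vertex $\ket{\psi_j}$ supplied by Theorem~\ref{thm:vertexpoly} is a single fixed vertex, so the reduced state $\Psi_{j,A}$ is the corresponding reduced vertex state, and the inequality holds for that particular $j$ — which is exactly what the statement claims ("it exists a vertex $\ket{\psi_j}\bra{\psi_j}$ of $P$ such that..."). Thus the proof is: (i) partial trace is CPTP, hence $T$-contractive; (ii) apply Theorem~\ref{thm:vertexpoly} to get the $\delta$ bound on $\Psi$ vs.\ $\ket{\psi_j}\bra{\psi_j}$; (iii) push it through the partial trace; (iv) feed the result into Fannes--Audenaert, using monotonicity of $f$ in the dimension to replace $d^{|A|}$ by $d^n$.
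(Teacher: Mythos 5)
Your proof is correct and follows exactly the paper's argument: the paper also derives this corollary by combining Theorem \ref{thm:vertexpoly}, the contractivity of the trace distance under the partial-trace CPTP map, and the Fannes--Audenaert inequality. Your added remark that the reduced states live in dimension $d^{|A|}$ and that one uses monotonicity of $f$ in the dimension to pass to $f(\delta, d^n)$ is a sensible bit of extra care that the paper leaves implicit.
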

\noindent This follows directly from the non-increasing feature of the trace distance under CPTP maps, $T(\rho_A, \sigma_A) \leq T(\rho, \sigma)$, Theorem \ref{thm:vertexpoly} and the Fannes-Audenaert inequality. This result explains the features seen in Figs. \ref{fig:entanglementEntropyvsMTD} and \ref{fig:MTDvsMeanEntropy_Pure-3qubit}, illustrating how sampled states with low non-stabilizerness have close entanglement properties to those of pure stabilizer states. 

\section{Discussion} \label{sec:Discussion}
Understanding the structure of the stabilizer polytope remains a fundamental yet challenging problem due to its rapid growth in complexity with increasing system size.  While asymptotic analyses of random states provide valuable theoretical insights, they often overlook the finer structural details present in intermediate system sizes. These details can be crucial for understanding the constraints on quantum resources in near-term quantum devices.

Motivated by this, we investigated the stabilizer polytope and its geometric properties in small-dimensional quantum systems. Using the trace distance to the stabilizer set as a measure of non-stabilizerness, we explored the structure of the quantum state space and characterized the distribution of non-stabilizerness for both pure and mixed states sampled according to the Haar measure. Additionally, we provided a Bell-like inequality formulation of the stabilizer polytope, extending previous results by finding new useful inequalities and in some cases explicitly connecting the violation of these inequalities to the trace-distance-based quantification of non-stabilizerness. Our findings reveal a strong trade-off between entanglement and non-stabilizerness in few-qubit systems, identifying forbidden regions where high non-stabilizerness strictly limits entanglement and vice versa. Furthermore, leveraging Fannes’ inequality, we derived a general analytical concentration result that formalizes the observed relationship between non-stabilizerness and entanglement for states near the vertices of the stabilizer polytope.

Our results suggest several promising directions for future research. Extending these analyses to higher-dimensional systems, particularly by deriving Bell-like inequalities for arbitrary numbers of qubits, could provide deeper insights into the geometry of the stabilizer set. Additionally, investigating the behavior of non-stabilizerness under realistic noise models may help clarify its resilience as a computational resource. A particularly interesting direction is the potential connection between trace-distance-based non-stabilizerness and the computational complexity of preparing the corresponding quantum states. An important open question is whether the simulability theorems of the RoM can be extended to the NTD \cite{howard2017application}. While our study has focused on few-qubit systems, we hope that our results will inspire further research into these fundamental aspects of stabilizer polytope and its consequences for quantum information processing.

\section*{Acknowledgements}
This work was supported by the Simons Foundation (Grant Number 1023171, RC), the Brazilian National Council for Scientific and Technological Development (CNPq, Grants No.307295/2020-6 and No.403181/2024-0), and the Financiadora de Estudos e Projetos (grant 1699/24 IIF-FINEP). We thank the High-Performance Computing Center (NPAD) at UFRN for providing computational resources. We also thank the Qiskit and PennyLane teams for their SDKs, which were utilized in our simulations.

\appendix

\section{NTD Properties' proofs}\label{apd:mtd_prop}
In this Appendix, we provide the derivations showing that the NTD satisfies the necessary conditions for a valid resource measure, as stated in section \ref{subsubsec:mtd}. The conditions are the following:
\begin{enumerate}
    \item Faithfulness: $\mathcal{M}(\rho) = 0$, if and only if $\rho$ is stabilizer, otherwise, $\mathcal{M}(\rho)>0$,
    \item Monotonicity: for all trace-preserving stabilizer channels $\varepsilon$, $\mathcal{M}(\varepsilon(\rho)) \leq \mathcal{M}(\rho)$,
    \item Convexity: $\mathcal{M}(\sum_i p_i\rho_i) \leq \sum_i p_i\mathcal{M}(\rho_i)$, for $p_i \geq 0$ and $\sum_i p_i = 1$.
    \item Invariance under Clifford Unitaries: $\mathcal{M}(C\rho C^{\dagger}) = \mathcal{M}(\rho)$, for any Clifford unitary $C$.
\end{enumerate}
\begin{proof}
Their demonstrations are the following:
\begin{enumerate}
    \item Notice that NTD is faithful by definition. If the state $\rho$ lies inside the stabilizer polytope, $\rho \in P^{\text{STAB}}_{d,n}$, then there exists $\sigma = \rho \in P^{\text{STAB}}_{d,n}$  such that $\rho -\sigma = 0$, meaning that $\mathcal{M}(\rho) = 0$. If the state $\rho$ is non-stabilizer, then for any $\sigma \in P^{\text{STAB}}_{d,n}$  $\rho -\sigma \geq 0$, implying $\mathcal{M}(\rho)>0$.

    \item The NTD monotonicity follows from the contractiveness of the trace distance under CPTP maps demonstrated in \cite{nielsen2010quantum}. For any trace-preserving stabilizer channel $\varepsilon$,
\begin{equation}
    \mathcal{M}(\varepsilon(\rho)) = \min_{\sigma\in P^{\text{STAB}}_{d,n}} T(\varepsilon(\rho),\sigma).\label{eq: min_stab_channel}
\end{equation}

As $\varepsilon$ is a stabilizer channel, $\varepsilon(\sigma)$ is a stabilizer state for $\sigma \in  P^{\text{STAB}}_{d,n}$. This means that $\{\varepsilon(\sigma)| \sigma \in  P^{\text{STAB}}_{d,n} \}$ is some subset of the stabilizer polytope, which implies the following bound on  Eq.~(\ref{eq: min_stab_channel}): 
\begin{equation}
    \min_{\sigma\in P^{\text{STAB}}_{d,n}}T(\varepsilon(\rho),\sigma) \leq \min_{\sigma\in P^{\text{STAB}}_{d,n}} T(\varepsilon(\rho),\varepsilon(\sigma)). 
\end{equation}

Knowing that these stabilizer channels are CPTP maps, one can use the trace distance's contractiveness under CPTP maps, such that
\begin{equation}
    \min_{\sigma\in P^{\text{STAB}}_{d,n}} T(\varepsilon(\rho),\varepsilon(\sigma)) \leq \min_{\sigma\in P^{\text{STAB}}_{d,n}} T(\rho,\sigma) = \mathcal{M}(\rho)
\end{equation}
proving the monotonicity of the NTD:
\begin{equation}
    \mathcal{M}(\varepsilon(\rho)) \leq \mathcal{M}(\rho).
\end{equation}

\item The NTD convexity follows directly from the trace distance convexity in its first input \cite{nielsen2010quantum}.
\item In turn, the NTD invariance under Clifford unitaries is a direct consequence of the trace distance invariance under unitaries~\cite{nielsen2010quantum}: given a Clifford unitary $C$, we have that
\begin{equation}
    \mathcal{M}(C \rho C^{\dagger}) =  \min_{\sigma\in P^{\text{STAB}}_{d,n}} T(C \rho C^{\dagger} , \sigma).
\end{equation}
Recalling that the stabilizer set is invariant under Clifford unitaries, that is, if $\sigma \in  P^{\text{STAB}}_{d,n}$  then  $C \sigma C^{\dagger}\in  P^{\text{STAB}}_{d,n}$, it follows that
\begin{equation}
    \min_{\sigma\in P^{\text{STAB}}_{d,n}} T(C \rho C^{\dagger} ,\sigma) = \min_{\sigma\in P^{\text{STAB}}_{d,n}} T(C \rho C^{\dagger} , C \sigma C^{\dagger}).
\end{equation}
Using the trace distance's invariance under unitary operations~\cite{nielsen2010quantum}, one can see that
\begin{equation}
    \min_{\sigma\in P^{\text{STAB}}_{d,n}} T(C \rho C^{\dagger} ,C \sigma C^{\dagger}) = \min_{\sigma\in P^{\text{STAB}}_{d,n}} T(\rho  , \sigma) = \mathcal{M}(\rho),
\end{equation}
from which it directly follows that
\begin{equation}
    \mathcal{M}(C \rho C^{\dagger}) = \mathcal{M}(\rho).
\end{equation}
\end{enumerate}
This completes the proof showing that the trace distance is a valid measure of non-stabilizerness. 
\end{proof}

\section{Concentration properties of low non-stabilizer states' proofs}\label{apd:concentration}
In this Appendix, we provide the proof for Theorem \ref{thm:concentration} from section \ref{sec:concentration_properties}. To do so, we first establish a preliminary lemma relating the Von Neumann entropy of a state to its maximal eigenvalue.

 \begin{lemma}\label{lem:aux_entropy}
     Let $\rho$ be any state with dimension $d \geq 2$, and $\{\lambda_i\}$ its spectrum. Then,  (i) $S(\rho) \geq h(\lambda_\mathrm{max})$, where  $h(\lambda_\mathrm{max}) = H(\{\lambda_\mathrm{max}, 1-\lambda_\mathrm{max}\}) = -\lambda_\mathrm{max} \log \lambda_\mathrm{max} - (1-\lambda_\mathrm{max})\log(1-\lambda_\mathrm{max})$ is the binary entropy function of the maximum eigenvalue $\lambda_\mathrm{max}\equiv \max_i \{\lambda_i\}$ of $\rho$ and (ii)
      if its Von Neumann entropy satisfies $S(\rho) < 1 $, then  $\lambda_\mathrm{max} > 1/2$.
      \label{lem:vonneumanntoshannon}
 \end{lemma}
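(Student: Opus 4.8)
The plan is to prove the two claims separately, both relying on standard properties of the von Neumann entropy expressed via its eigenvalue decomposition $S(\rho) = H(\{\lambda_i\}) = -\sum_i \lambda_i \log \lambda_i$.

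For part (i), I would use the grouping (or ``coarse-graining'') property of the Shannon entropy. Write the spectrum as $\{\lambda_\mathrm{max}\} \cup \{\lambda_i\}_{i \neq \mathrm{max}}$, and group all the non-maximal eigenvalues into a single block of total weight $1 - \lambda_\mathrm{max}$. The grouping identity gives
\begin{equation}
    H(\{\lambda_i\}) = H(\{\lambda_\mathrm{max}, 1-\lambda_\mathrm{max}\}) + (1-\lambda_\mathrm{max})\, H\!\left(\left\{\tfrac{\lambda_i}{1-\lambda_\mathrm{max}}\right\}_{i\neq\mathrm{max}}\right)\text{ }.
\end{equation}
Since the conditional entropy term is nonnegative (entropy of a probability distribution is nonnegative) and the prefactor $1-\lambda_\mathrm{max}\geq 0$, we conclude $S(\rho) = H(\{\lambda_i\}) \geq H(\{\lambda_\mathrm{max}, 1-\lambda_\mathrm{max}\}) = h(\lambda_\mathrm{max})$, which is exactly claim (i). (If $\rho$ is pure, $\lambda_\mathrm{max}=1$ and both sides are zero, so the bound still holds.)

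For part (ii), I would argue by contrapositive using (i) together with properties of the binary entropy function $h$. Suppose $\lambda_\mathrm{max} \leq 1/2$. The binary entropy $h$ is continuous on $[0,1]$, strictly increasing on $[0,1/2]$, strictly decreasing on $[1/2,1]$, symmetric about $1/2$, and attains its maximum value $h(1/2)=\log 2 = 1$ (in base-2 log). Also, since $\rho$ has dimension $d\geq 2$ and $\mathrm{Tr}\,\rho = 1$, we have $\lambda_\mathrm{max} \geq 1/d \geq 1/d > 0$, so $\lambda_\mathrm{max} \in (0, 1/2]$. On this interval $h$ is increasing, so $h(\lambda_\mathrm{max})$ ranges in $(0, 1]$; in particular we only get $h(\lambda_\mathrm{max}) = 1$ when $\lambda_\mathrm{max} = 1/2$ exactly. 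Hmm — this shows $S(\rho) \geq h(\lambda_\mathrm{max})$ but $h(\lambda_\mathrm{max})$ can be close to $1$, so I need a sharper input when $\lambda_\mathrm{max}$ is near $1/2$.

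The main obstacle, then, is the borderline regime where $\lambda_\mathrm{max}$ is near $1/2$: there the bound $S(\rho)\geq h(\lambda_\mathrm{max})$ alone does not force $S(\rho) \geq 1$. To handle it I would strengthen the argument as follows. If $\lambda_\mathrm{max} \leq 1/2$ and $d \geq 3$, at least two eigenvalues besides the largest are nonzero generically, but more robustly: since the remaining weight $1-\lambda_\mathrm{max} \geq 1/2$ is distributed among eigenvalues each $\leq \lambda_\mathrm{max} \leq 1/2$, the conditional-entropy term in the grouping identity is strictly positive unless the residual distribution is a point mass, i.e. unless there is a single other eigenvalue equal to $1-\lambda_\mathrm{max}$. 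That degenerate case forces $\lambda_\mathrm{max} = 1-\lambda_\mathrm{max} = 1/2$ (since $\lambda_\mathrm{max}$ is the max), giving spectrum $\{1/2,1/2\}$ and $S(\rho)=1$, contradicting $S(\rho)<1$. In every non-degenerate case the strict inequality $S(\rho) > h(\lambda_\mathrm{max})$ combined with $\lambda_\mathrm{max}\le 1/2$ still need not exceed $1$ — so the cleanest route is actually to invoke that $S(\rho) < 1$ forces the distribution to be ``peaked'': by concavity/monotonicity of $h$ and the grouping bound, $S(\rho) < 1 = h(1/2)$ together with $S(\rho) \ge h(\lambda_\mathrm{max})$ yields $h(\lambda_\mathrm{max}) < h(1/2)$, and since $h$ is strictly increasing on $[0,1/2]$ this gives $\lambda_\mathrm{max} < 1/2$ — wait, that's the wrong direction. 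Let me instead use the \emph{upper} bound: I would show $S(\rho) < 1$ directly constrains $\lambda_\mathrm{max}$ from below by noting that among all spectra with $\lambda_\mathrm{max} = \mu$ fixed, the entropy is maximized by spreading the rest uniformly, giving $S(\rho) \le h(\mu) + (1-\mu)\log(d-1)$; but for the statement we only need $d\ge 2$, so for $d=2$ this reads $S(\rho) \le h(\mu) = h(\lambda_\mathrm{max})$, combined with (i) gives $S(\rho) = h(\lambda_\mathrm{max})$, and $S(\rho)<1=h(1/2)$ with $h$ unimodal about $1/2$ forces $\lambda_\mathrm{max} \neq 1/2$, i.e. $\lambda_\mathrm{max} > 1/2$ or $\lambda_\mathrm{max}<1/2$; since $\lambda_\mathrm{max}\ge 1/d = 1/2$ we must have $\lambda_\mathrm{max} > 1/2$. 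For $d \ge 3$ I expect the paper handles it by the same unimodality argument applied to $h(\lambda_\mathrm{max}) \le S(\rho) < 1 = h(1/2)$ together with the a priori bound $\lambda_\mathrm{max}\ge 1/d$, ruling out the small-$\lambda_\mathrm{max}$ branch by a separate estimate; this case analysis is the part I would write most carefully.
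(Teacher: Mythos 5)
Part (i) of your proposal is correct and is essentially the paper's own argument: the grouping identity $H(\{\lambda_i\}) = h(\lambda_\mathrm{max}) + (1-\lambda_\mathrm{max})\,H(\{\lambda_i/(1-\lambda_\mathrm{max})\}_{i\neq\mathrm{max}})$ together with nonnegativity of the residual entropy. Nothing to change there.

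Part (ii), however, has a genuine gap. Your argument only closes for $d=2$, where $\lambda_\mathrm{max}\geq 1/d = 1/2$ lets unimodality of $h$ exclude the left branch. For $d\geq 3$ the a priori bound is only $\lambda_\mathrm{max}\geq 1/d < 1/2$, so the inference ``$h(\lambda_\mathrm{max}) \leq S(\rho) < 1 = h(1/2)$ hence $\lambda_\mathrm{max}\neq 1/2$'' leaves the entire branch $\lambda_\mathrm{max}\in[1/d,\,1/2)$ alive, and you explicitly defer the ``separate estimate'' needed to kill it. That estimate is the actual content of part (ii), and the lower bound $S(\rho)\geq h(\lambda_\mathrm{max})$ from part (i) cannot supply it: it points in the wrong direction, as you noticed. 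The paper closes the gap by Schur-concavity: if $\lambda_\mathrm{max}\leq 1/2$, the distribution $(1/2,1/2,0,\dots,0)$ majorizes the spectrum (the only nontrivial partial-sum condition is $\lambda_\mathrm{max}\leq 1/2$ itself), hence $S(\rho) = H(\{\lambda_i\}) \geq H(\{1/2,1/2,0,\dots,0\}) = 1$, contradicting $S(\rho)<1$. An even more elementary patch, which would let you avoid majorization entirely, is the min-entropy bound
\begin{equation}
S(\rho) = \sum_i \lambda_i \log\frac{1}{\lambda_i} \;\geq\; \sum_i \lambda_i \log\frac{1}{\lambda_\mathrm{max}} \;=\; -\log\lambda_\mathrm{max}\;,
\end{equation}
so $\lambda_\mathrm{max}\leq 1/2$ forces $S(\rho)\geq \log 2 = 1$ directly, for every $d\geq 2$. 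Either route replaces the missing case analysis; as written, your proof of (ii) is incomplete.
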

 \begin{proof}
     Denote $\{\lambda_i\}_i$ as the spectrum of $\rho$. We know then:
     \begin{equation}
         S(\rho) = H(\{\lambda_i\}_i) = -\sum_i \lambda_i \log \lambda_i\;.
     \end{equation}
    We will show that both results follow from standard information-theoretic properties of the Shannon entropy.
     
     (i) Note that we can write:
     \begin{equation}
         H(\{\lambda_i\}_i) -h(\lambda_\mathrm{max}) = g\;,
     \end{equation}
     where:
     \begin{align}
         g &= (1-\lambda_\mathrm{max}) \log(1-\lambda_\mathrm{max}) - \sum_{i \neq \lambda_\mathrm{max}}\lambda_i \log \lambda_i\\
         & =  - \sum_{i \neq \lambda_\mathrm{max}}\lambda_i \log  \left(\frac{\lambda_i}{1-\lambda_\mathrm{max}}\right)\\
         &= (1-\lambda_\mathrm{max}) H\left(\left \{\frac{\lambda_i}{1-\lambda_\mathrm{max}} \right \}\right) \geq 0\;,
     \end{align}
     where we used that $\lambda_\mathrm{max} + \sum_{i \neq \mathrm{max}} \lambda_i=1$ on the first to the second line. Therefore $S(\rho) \geq h(\lambda_\mathrm{max})$ .
     
     (ii) follows from the Schur-concavity of the Shannon entropy. That is, given two probability distributions $ \{\lambda_i\}$ and $ \{\lambda^\prime_i\}$ such that $\{ \lambda_i\} \succ \{\lambda^\prime_i\}$, meaning that:
     \begin{equation}
         \sum_{i=1}^{k} \lambda_i^{\downarrow} \geq \sum_{i=1}^{k} (\lambda_i^\prime)^{\downarrow }\quad  ; \quad \forall k \in \{1,2,\cdots, D-1\}\;,
        \label{eq:majorization}
     \end{equation}
     where $[\lambda_i^{\downarrow }]_i$ and $[(\lambda^\prime_i)^{\downarrow}]_i$ denote both sets written in decreasing order, then it is known that the Shannon entropy satisfies $H(\{\lambda_i\})  \leq H(\{\lambda^\prime_i\} )$. 

     First, we will check that the probability distribution $ \{Q_1, Q_2, Q_3, \cdots , Q_d\} = (1/2, 1/2,0, \cdots, 0 )$ majorizes any $\{\lambda_i\}_i$ such that $\max_i \{\lambda_i\} \leq 1/2$. In this case, we know that $\sum_{i \neq \mathrm{max}}\lambda_i \geq 1/2$. The first condition on Eq. (\ref{eq:majorization}), for $k=1$, is given as:
     \begin{equation}
          Q^{\downarrow}_1 \geq \lambda_1^{\downarrow}\Rightarrow \lambda_\mathrm{max} \leq \frac 1 2\;,
     \end{equation}
     which is indeed true. All the other conditions are trivially satisfied, since $\sum_{i=1}^{k > 1}Q^\downarrow_i =1$. Therefore, the corresponding Shannon entropy satisfies:
     \begin{equation}
         H(\{\lambda_i\}) \geq H(\{Q_i\}) = h(1/2) =1\;.
     \end{equation}
    Hence, if $H(\{\lambda_i\}) < 1$, we must have $\lambda_\mathrm{max} > 1/2$. 
 \end{proof}

Equipped with Lemma \ref{lem:aux_entropy}, we now proceed to the proof of the main concentration result:

 \begin{proof}
    Let $\sigma_{\Psi} = \mathrm{argmin}_{\sigma \in P}T(\Psi, \sigma)$ be the nearest state in $P$ to $\Psi$, unique due to convexity. Since $P$ is a polytope, $\sigma_{\Psi}$ lies in a facet $F$, so we can write
    \begin{align}
    \sigma_{\Psi} = \sum_{i=1}^{|F|} p_i |\psi_i\rangle \langle\psi_i|\;.  
    \end{align}
    where $|F|$ is the number of vertices spanning $F$. By the Fannes-Audenaert inequality, its entropy must be similar  w.r.t the state $\Psi =  \ket{\psi} \bra{\psi}$, which vanishes:
    \begin{align}
    S(\sigma_\Psi) \leq  f(\varepsilon,d^n)\;. 
    \end{align}
    By the results of Lemma \ref{lem:vonneumanntoshannon}, this implies that we have:
    \begin{equation}
       h( \lambda_\mathrm{max} ) \leq f(\varepsilon, d^n)\;,
    \end{equation}
    with $\lambda_\mathrm{max}$  the maximal eigenvalue of $\sigma_\Psi$. We thus define $\varepsilon_* \leq  1-1/d^n$ as the value such that $f(\varepsilon, d^n)  < 1$ for all $\varepsilon \leq \varepsilon^*$. This can be attained since we have that $f(\varepsilon \leq \varepsilon_*, d^n) \leq n \log d$.
    
    Hence, for $\varepsilon \leq \varepsilon^*$, $S(\sigma_\Psi) < 1$  . Lemma \ref{lem:vonneumanntoshannon} implies $\lambda_\mathrm{max} > 1/2$, on which the binary entropy function is one-to-one and monotonically decreasing. Therefore:
    \begin{equation}
        \lambda_\mathrm{max} \geq h^{-1}[f(\varepsilon, d^n)]\;.
        \label{eq:maxeigenvaluebound}
    \end{equation}
    
    Now, let $\ket{\psi_\mathrm{max}}$ be an eigenvector of $\sigma_\Psi$ with eigenvalue $\lambda_\mathrm{max}$. Then, by the Fuchs- van de Graaf inequality \cite{nielsen2010quantum}:
    \begin{equation}
        T(\sigma_\Psi, \ket{\psi_\mathrm{max}} \bra{\psi_\mathrm{max}})  \leq \sqrt{1- \bra{\psi_\mathrm{max}} \sigma_\Psi \ket{\psi_\mathrm{max}}}\;.
    \end{equation}
    Since $\bra{\psi_\mathrm{max}} \sigma_\Psi \ket{\psi_\mathrm{max}} = \lambda_\mathrm{max}$ and $f(x) = \sqrt{1-x}$ is monotonically decreasing for $x \in [0,1]$, we then derive:
    \begin{equation}
    T(\sigma_\Psi,\ket{\psi_\mathrm{max}} \bra{\psi_\mathrm{max}}) \leq \sqrt{1- h^{-1}[f(\varepsilon, d^n)]}\;,
    \end{equation}
    which implies that $\sigma_\Psi$ becomes pure as $\varepsilon \to 0$. We now proceed to show that the distance of $\ket{\psi_\mathrm{max}}$ and the nearest vertex $\ket{\psi_{j_\mathrm{max}}}$ is also upper bounded by the same quantity.
    We have:
    \begin{equation}
        T(\ket{\psi_\mathrm{max}} \bra{\psi_\mathrm{max}}, \ket{\psi_j} \bra{\psi_j} ) = \sqrt{1-|\langle \psi_\mathrm{max}| \psi_j\rangle|^2}\;,
    \end{equation}
    note that:
    \begin{equation}
        \lambda_\mathrm{max} = \sum_j p_j |\langle \psi_\mathrm{max}| \psi_j\rangle|^2 \leq |\langle \psi_\mathrm{max}| \psi_{j_\mathrm{max}}\rangle|^2\;,
    \end{equation}
    where $\psi_{j_\mathrm{max}} = \mathrm{argmax}_j |\langle \psi_\mathrm{max}|\psi_j\rangle|^2$ is the state with maximum fidelity with $\psi_\mathrm{max}$ in the facet. By Eq. (\ref{eq:maxeigenvaluebound}), it follows that:
    \begin{equation}
        T(\ket{\psi_\mathrm{max}} \bra{\psi_\mathrm{max}}, \ket{\psi_{j_\mathrm{max}}} \bra{\psi_{j_\mathrm{max}}} ) \leq \sqrt{1-h^{-1}[f(\varepsilon, d^n)]}\;.
    \end{equation}
    Now, we can use the triangular inequality repeatedly on the sequence of states $\Psi \rightarrow \sigma_\Psi \rightarrow \psi_\mathrm{max} \to \psi_{j_\mathrm{max}}$, and we show the desired bound by summing the three distances:
    \begin{align}
        T(\Psi, \ket{\psi_{j_\mathrm{max}}}\bra{\psi_{j_\mathrm{max}}}) \leq \varepsilon + 2\sqrt{1-h^{-1}[f(\varepsilon,  d^n)]}\;.
    \end{align}
\end{proof}

\bibliography{refs}

\end{document}